\documentclass[journal]{IEEEtran}
\ifCLASSINFOpdf
  % \usepackage[pdftex]{graphicx}
  % declare the path(s) where your graphic files are
  % \graphicspath{{../pdf/}{../jpeg/}}
  % and their extensions so you won't have to specify these with
  % every instance of \includegraphics
  % \DeclareGraphicsExtensions{.pdf,.jpeg,.png}
\else
  % or other class option (dvipsone, dvipdf, if not using dvips). graphicx
  % will default to the driver specified in the system graphics.cfg if no
  % driver is specified.
  % \usepackage[dvips]{graphicx}
  % declare the path(s) where your graphic files are
  % \graphicspath{{../eps/}}
  % and their extensions so you won't have to specify these with
  % every instance of \includegraphics
  % \DeclareGraphicsExtensions{.eps}
\fi
\hyphenation{op-tical net-works semi-conduc-tor}

\newtheorem{Theorem}{Theorem}
\newtheorem{Lemma}{Lemma}

\newtheorem{Proposition}{Proposition}
\newtheorem{Example}{Example}
\newtheorem{Remark}{Remark}
\newtheorem{Definition}{Definition}

\usepackage {amssymb}
\usepackage[numbers,sort&compress]{natbib}
\usepackage{amsmath}
\usepackage{graphicx,color}
\usepackage{subfigure,threeparttable,diagbox}
\usepackage{multirow}
\usepackage{algorithm}
\usepackage{algpseudocode}
\usepackage{graphics}
\usepackage{epsfig}
\usepackage{url,colortbl}
\newcommand{\tabincell}[2]{\begin{tabular}{@{}#1@{}}#2\end{tabular}}
\begin{document}
%
% paper title
% can use linebreaks \\ within to get better formatting as desired
% Do not put math or special symbols in the title.
\title{Deterministic Constructions of Binary Measurement Matrices from Finite Geometry}
%
%
% author names and IEEE memberships
% note positions of commas and nonbreaking spaces ( ~ ) LaTeX will not break
% a structure at a ~ so this keeps an author's name from being broken across
% two lines.
% use \thanks{} to gain access to the first footnote area
% a separate \thanks must be used for each paragraph as LaTeX2e's \thanks
% was not built to handle multiple paragraphs
%

\author{Shu-Tao~Xia,~Xin-Ji~Liu,~Yong~Jiang,~and~Hai-Tao~Zheng% <-this % stops a space
\thanks{This research is supported in part by the Major
State Basic Research Development Program of China (973 Program, 2012CB315803), the National Natural
Science Foundation of China (61371078, 61375054), and the Research Fund for the Doctoral Program of Higher Education of China (20130002110051).}% <-this % stops a space
\thanks{All of the authors are with the Graduate School at Shenzhen, Tsinghua University, Shenzhen 518055, China (e-mail:  xiast@sz.tsinghua.edu.cn; liuxj11@mails.tsinghua.edu.cn; jiangy@sz.tsinghua.edu.cn; zheng.haitao@sz. tsinghua.edu.cn).}% <-this % stops a space
%\thanks{Manuscript received April 19, 2005; revised December 27, 2012.}
}

\maketitle

% As a general rule, do not put math, special symbols or citations
% in the abstract or keywords.
\begin{abstract}
Deterministic constructions of measurement matrices in compressed sensing (CS) are considered in this paper.
%While many existing deterministic constructions have strict restrictions to {the numbers of rows}, we construct binary sparse measurement matrices from finite geometry that have {relatively} flexible sizes.
The constructions are inspired by the recent discovery of Dimakis, Smarandache and Vontobel which says that parity-check matrices of good low-density parity-check (LDPC) codes can be used as {provably} good measurement matrices for compressed sensing {under $\ell_1$-minimization}.
The performance of the proposed binary measurement matrices is mainly theoretically analyzed with the help of the analyzing methods and results from (finite geometry) LDPC codes.
Particularly, several lower bounds of the spark (i.e., the smallest number of columns that are linearly dependent, which totally characterizes the recovery performance of $\ell_0$-minimization) of general binary matrices and finite geometry matrices are obtained and they improve the previously known results in most cases.
Simulation results show that the proposed matrices perform comparably to, sometimes even better than, the corresponding Gaussian random matrices.
Moreover, the proposed matrices are sparse, binary, and most of them have cyclic or quasi-cyclic structure, which will make the hardware realization convenient and easy.
\end{abstract}

% Note that keywords are not normally used for peerreview papers.
\begin{IEEEkeywords}
Compressed sensing, measurement matrix, spark, finite geometry, low-density parity-check codes, quasi-cyclic.
\end{IEEEkeywords}

% For peer review papers, you can put extra information on the cover
% page as needed:
% \ifCLASSOPTIONpeerreview
% \begin{center} \bfseries EDICS Category: 3-BBND \end{center}
% \fi
%
% For peerreview papers, this IEEEtran command inserts a page break and
% creates the second title. It will be ignored for other modes.
\IEEEpeerreviewmaketitle

\section{Introduction}\label{intro}
% The very first letter is a 2 line initial drop letter followed
% by the rest of the first word in caps.
%
% form to use if the first word consists of a single letter:
% \IEEEPARstart{A}{demo} file is ....
%
% form to use if you need the single drop letter followed by
% normal text (unknown if ever used by IEEE):
% \IEEEPARstart{A}{}demo file is ....
%
% Some journals put the first two words in caps:
% \IEEEPARstart{T}{his demo} file is ....
%
% Here we have the typical use of a "T" for an initial drop letter
% and "HIS" in caps to complete the first word.
\IEEEPARstart{C}{ompressed} sensing  (CS) \cite{ecjr, ectt, dono} is an emerging sparse sampling theory which has received large amounts of attention recently.
Consider a \emph{$k$-sparse} signal $\textit{\textbf{x}}=(x_{1}, x_{2}, \ldots, x_{n})^T\in\mathbb{R}^{n}$ with at most $k$ nonzero entries. Let $A\in \mathbb{R}^{m\times n}$ be a \emph{measurement matrix} with $m\ll n$ and $\textit{\textbf{y}}=A\textbf{\textit{x}}$ be the measurement vector.
Compressed sensing tries to recover the signal $\textit{\textbf{x}}$ from the measurement vector $\textit{\textbf{y}}$ by solving the the following \emph{$\ell_0$-minimization} problem
\begin{equation}\label{l0}
    \min ||\textit{\textbf{x}}||_{0} \quad s.t.\quad A\textbf{\textit{x}}=\textit{\textbf{y}},
\end{equation}
where $||\textit{\textbf{x}}||_{0}\triangleq |\{i: x_{i}\neq 0\}|$ denotes the $\ell_{0}$-quasi-norm of $\textit{\textbf{x}}$.
Unfortunately, it is well-known that the problem (\ref{l0}) is NP-hard in general, {cf. [MP5] in \cite{mgdj}}.
In compressed sensing, there are essentially two popular methods to deal with it.
One pursues greedy algorithms for (\ref{l0}), such as the orthogonal matching pursuit (OMP) algorithm \cite{jtag} and its modifications \cite{dnjt,wdom}.
The other one considers a convex relaxation of (\ref{l0}), or the \emph{$\ell_1$-minimization} (basis pursuit, BP) problem, as follows \cite{ectt}
\begin{equation}\label{l1}
    \min ||\textit{\textbf{x}}||_{1} \quad s.t.\quad A\textbf{\textit{x}}=\textit{\textbf{y}},
\end{equation}
where $||\textit{\textbf{x}}||_{1}\triangleq\sum_{i=1}^{n}|x_{i}|$
denotes the $\ell_{1}$-norm of $\textit{\textbf{x}}$. Note that (\ref{l1}) can be turned into a linear programming (LP) problem and thus tractable.
While considering the recovery performance, people often distinguish between the \emph{for-all} (or \emph{worst-case}) and the \emph{for-each} (or \emph{average-case}) performance \cite{sjdm,agpi}, where the former one corresponds to the situation that \textit{every} $k$-sparse signal is perfectly recovered while the latter one guarantees \emph{most}, instead of all, $k$-sparse signals are well reconstructed.

The construction of the measurement matrix $A$ is one of the main concerns in compressed sensing.
In order to select an appropriate matrix, we need some criteria.
In their earlier and fundamental work, Donoho and Elad \cite{ddme} introduced the concept of \emph{spark}. The spark of a measurement matrix $A$, denoted by ${\rm spark}(A)$, is defined to be
\begin{eqnarray}
\label{spark}
{\rm spark}(A)=\min\{||\textit{\textbf{w}}||_{0}:
\textit{\textbf{w}}\in {\rm Nullsp}_{\mathbb{R}}^{*}(A)\},
\end{eqnarray}
where
\begin{eqnarray}
\label{nullsp}
{\rm Nullsp}_{\mathbb{R}}^{*}(A)\triangleq\{\textbf{\textit{w}}\in
\mathbb{R}^n: A\textbf{\textit{w}}=\textbf{0}, \textbf{\textit{w}}\neq\textbf{0}\}.
\end{eqnarray}
Furthermore, it has been shown that if
\begin{equation}
\label{proeq1}
{\rm spark}(A)>2k,
\end{equation}
every $k$-sparse signal $\textit{\textbf{x}}$ can be exactly recovered by $\ell_0$-minimization \cite{ddme}.
In fact, it is easy to show that the condition (\ref{proeq1}) is also necessary {{for $\ell_0$-minimization}}.
Hence, spark is a relatively important performance parameter of the measurement matrix in the sense that \emph{some} signals with sparsity $k\geq{\rm spark}(A)/2$ \emph{cannot} be exactly recovered by any recovery algorithms.
Other important criteria include the coherence, restricted isometry property (RIP) \cite{ectt2} and nullspace property (NSP) \cite{wxbh,mswx}.
{It has been proved that if $A$ satisfies RIP with restricted isometry constant (RIC)
\begin{equation}\label{eq:ric}
\delta_{tk}<\sqrt{(t-1)/t}
\end{equation}
for some constant $t\geq 4/3$ \cite{cai2014}\footnote{ $\delta_{tk}<\sqrt{(t-1)/t}$ is also shown to be sharp for any $t\geq 4/3$ in \cite{cai2014}.} or the nullspace property $NSP^{<}_{\mathbb{R}}(k,C=1)$ \cite{adrs}, every $k$-sparse signal can be recovered by $\ell_1$-minimization.
For a matrix $A\in \mathbb{R}^{m\times n}$ with columns $\textit{\textbf{a}}_{1},
\textit{\textbf{a}}_{2}, \ldots, \textit{\textbf{a}}_{n}$, the \emph{coherence} of $A$ is defined as:
\begin{equation}
\label{defcoh}
  \mu(A)\triangleq \max_{1\leq i\neq j\leq n}{\frac{|\langle\textit{\textbf{a}}_{i}, \textit{\textbf{a}}_{j}\rangle|}{||\textit{\textbf{a}}_{i}||_{2}||\textit{\textbf{a}}_{j}||_{2}}},
\end{equation}
where $\langle \textit{\textbf{a}}_i, \textit{\textbf{a}}_j\rangle \triangleq \textit{\textbf{a}}_i^T \textit{\textbf{a}}_j$ and {$||\emph{\textbf{z}}||_2=\sqrt{\sum_{i=1}^m z_i^2}$ denotes the $\ell_2$-norm of $\emph{\textbf{z}}=(z_1,z_2, \ldots, z_m)^T$}.
The coherence $\mu(A)$ can be used to bound the spark and RIC of $A$ and it is shown that \cite{ddme, jbsd}:
\begin{equation}\label{generalspark}
  {\rm spark}(A)\geq 1+1/{\mu(A)},
\end{equation}
and
\begin{equation}\label{ric}
  \delta_{k}(A) \leq (k-1)\mu(A) .
\end{equation}
Therefore, any sparse signal can be exactly recovered by $\ell_0$-minimization with sparsity
\begin{equation}\label{kl0}
k_{\ell_0}<0.5+0.5/\mu(A),
\end{equation}
and by $\ell_1$-minimization with sparsity
\begin{equation}\label{kl1}
  k_{\ell_1}<0.667+0.385/\mu(A),
\end{equation}
where (\ref{kl1}) is obtained by the best condition $\delta_{\frac{3}{2}k}<\frac{\sqrt{3}}{3}$ from (\ref{eq:ric}).
However, according to the the Welch bound \cite{lwelch},
\begin{equation}\label{welch}
 \mu(A)\geq\sqrt{\frac{n-m}{m(n-1)}}.
\end{equation}
Therefore, by coherence, any matrix can only be proved to guarantee the perfect recovery of each signal with sparsity
\begin{equation}\label{sqrt-bn}
k\leq O(\sqrt{m})
\end{equation}
under both $\ell_0$-minimization and $\ell_1$-minimization, where (\ref{sqrt-bn}) is often called as the \emph{square-root bottleneck}.}
In this paper, {we will} mainly use spark to evaluate the \emph{ideal} (for-all) performance of measurement matrices since {for the proposed binary matrices, the maximum $k_{\ell_0}$ indicated by the improved spark lower bounds derived here is larger than the maximum $k_{\ell_1}$ in (\ref{kl1}) implied by coherence.
It is hoped that this will give an intuitively one-step-forward explanation on the good empirical performance of the constructed matrices.
However, it is necessary to keep in mind that this is only an intuitive and empirical statement. For general measurement matrices, large spark cannot definitely imply good practical performance since there may be some matrices with large spark but poor NSP, see the appendix for an example of such matrix.
}

Generally, constructing methods of measurement matrices can be divided into random and deterministic constructions.
Many random matrices, e.g., Gaussian matrices, partial Fourier matrices, \emph{etc.}, have been proved to satisfy RIP of order $k$ with overwhelming probability and $m= O(k\log(n/k))$ \cite{rbmd}.
%{
%In addition, some submatrices and variants of a Fourier matrix are also shown to have \emph{full spark} (i.e. spark equals to $m+1$ if the resulting matrix has $m$ rows) in \cite{bajc}.}
However, there is no guarantee that a specific realization of random matrix works and {some} random matrices require lots of storage space. On the other hand, a deterministic matrix is often generated on the fly, and some properties, e.g., spark, coherence, RIP and NSP, could be verified definitely.
There are many works on deterministic constructions, such as \cite{lasd, sjdm, rad, sdar, aafm, lggz, wbrc, rcsh, sjthesis,aavm,slgg,Li2014a,jbsd,Yu2013}.
{Most of them are based on coherence.
% since a matrix with low coherence will satisfy RIP with good RIC and thus the worst-case performance under $\ell_1$-minimization is guaranteed.
For example, in \cite{rad}, DeVore construct a class of $p^2\times p^{r+1}$ matrices with coherence $\mu=r/p$, where $p$ is a prime power and $0<r<p$ is an integer; and this construction is generalized by using algebraic curves in \cite{lggz}.
Using the binary and $p$-ary BCH codes, Amini, \emph{et al.} construct the $(p^l-1)\times p^{O(p^{(l-r)\frac{\log_p r}{r}})}$ bipolar ($p=2$, \cite{aafm}) and complex ($p$ is a prime integer, \cite{aavm}) measurement matrices with coherence $\frac{p}{2(p-1)}\cdot\frac{(p^{l-r}-1)}{(p^l-1)}$, where $1<l\in \mathbb{N}$, $1\leq r\leq l-1$.
Among them, those with coherence (asymptotically) achieving the Welch bound, i.e. $k$ (asymptotically) achieving the square-root bottleneck, are of the most interest, see \cite{Li2014a,Yu2013} and references therein.
Another important class of deterministic measurement matrices is the tight frame with (nearly) optimal coherence proposed by Calderbank and his coworkers \cite{sjdm,lasd,wbrc,rcsh,sjthesis,sdar}, such as the $m\times m^2$ chirp matrices with coherence $1/\sqrt{m}$ \cite{lasd}, the $m\times m^2$ Alltop Gabor frames with coherence $1/\sqrt{m}$ \cite{wbrc} and the $2^l \times 2^{(r+2)l}$ Delsarte-Goethals (DG) frames with coherence $2^{r-l/2}$ \cite{rcsh}, where $l$ is an odd number and $0\leq r\leq (l-1)/2$ is a constant integer.
Apart from the for-all performance (\ref{kl1}) under $\ell_1$-minimization guaranteed by coherence, the for-each performance under $\ell_1$-minimization of these tight frames is analyzed through the \emph{statistical RIP} (StRIP) \cite{rcsh} and \cite[Th. 1.2]{candes2009}.
In particular, if the chirp matrices, Alltop Gabor frames or DG frames are taken as measurement matrices, the $k$-sparse signal $\emph{\textbf{x}}$ with uniformly random support, uniformly random sign (for nonzero entries) and sparsity $k\leq O(m/\log n)$ can obtain perfect recovery under $\ell_1$-minimization with probability $1-O(1/n)$ \cite{sjdm,sjthesis}.
%Although the theoretical performance under $\ell_1$-minimization of these deterministic constructions with special sizes is well guaranteed by coherence and the empirical performance is also shown to be good by simulations, few analysis of the theoretical or empirical performance of their submatrices obtained by removing some rows is presented\footnote{The columns of a measurement matrix can be removed in any fixed way since the usual theoretical guarantees over a measurement matrix, such as coherence, spark, RIP, \emph{etc.}, keep valid (better or unchanged) after columns' deletion. Therefore, the numbers of columns are flexible.}.
%{In this paper, we will firstly construct a class of deterministic measurement matrices and then analyze the theoretical and empirical performance of their submatrices (we view these submatrices as another class of measurement matrices)}.}
In the following, we usually use $A$ to denote a real matrix and $H$ a binary matrix.
%\footnote{$A\in \mathbb{R}^{m\times n}$ is said to satisfy $(k,\epsilon,\delta_k)$-StRIP if for $k$-sparse vectors $\emph{\textbf{x}}\in \mathbb{R}^{n}$, $(1-\delta_k)||\emph{\textbf{x}}||_2^2\leq||\frac{1}{\sqrt{n}}A\emph{\textbf{x}}||_2^2\leq(1-\delta_k)||\emph{\textbf{x}}||_2^2$ holds with probability exceeding $1-\epsilon$ (with respect to a uniform distribution of the vectors $\emph{\textbf{x}}$ among all $k$-sparse vectors in $\mathbb{R}^n$ with the same fixed magnitudes \cite{rcsh}.).
%}

Recently, connections between low-density parity-check (LDPC) codes \cite{gallagerit} and compressed sensing excite interests. Dimakis, Smarandache, and Vontobel \cite{adrs} point out that the LP decoding of LDPC codes is very similar to the LP reconstruction (i.e., $\ell_1$-minimization) of CS, and further show that parity-check matrices of good LDPC codes can be used as provably good measurement matrices under $\ell_1$-minimization.
LDPC codes are a class of linear block codes, each of which is defined by the nullspace over $\mathbb{F}_2=\{0,1\}$ of a
binary sparse $m\times n$ parity-check matrix $H$.
%Let $\mathcal{I}=\{1, 2, \ldots, n\}$ and $\mathcal{J}=\{1, 2, \ldots, m\}$ denote the sets of column indices and row indices of $H$, respectively.
$H$ is said to be $(\gamma,\rho)$-\emph{regular} if $H$ has the uniform column weight $\gamma$ and the uniform row weight $\rho$.
{In \cite{wlkk,akat,atad}, the famous progressive edge-growth (PEG) algorithm \cite{xheeit} for LDPC codes is used to construct binary sparse measurement matrices. These matrices show empirically \cite{wlkk} and provably (\cite{akat,atad}, under $\ell_1$-minimization) good performance in CS.}
%\footnote{Recall that the ``Big-$O$'' notation: $f(n) = O (g(n))$ if $\;\exists c_1>0, c_2>0, n_0$ s.t. $\forall n>n_0, g(n)\cdot c_1\leq f(n)\leq g(n)\cdot c_2$; ``Big-O'' notation: $f(n)=O(g(n))$ if $\;\exists c>0, n_0$ s.t. $\forall n>n_0, f(n)\leq g(n)\cdot c$; ``Big-$O$'' notation: $f(n)=O (g(n))$ if $\;\exists c>0, n_0$ s.t. $\forall n> n_0, g(n)\cdot c\leq f(n)$.}
%Overall, those constructions are based on random LDPC codes and the shortcomings of random measurement matrices cannot be totally avoided.
%However, unlike many other deterministic constructions, the deterministic constructions based on the PEG algorithm relies on computer search and do not have explicit matrix entries.
%As a result, the whole measurement matrix is still needed to be stored, while many deter

Inspired by the connection between LDPC codes and CS in \cite{adrs}, we construct the deterministic measurement matrices from finite geometry LDPC (FG-LDPC) codes.
Our main contributions focus on the following two aspects.
\begin{itemize}
\item
\emph{Constructing two classes of deterministic measurement matrices from finite geometry.} LDPC codes based on finite geometry (FG) could be found in \cite{klf, txla05}. With similar methods, two classes of deterministic measurement matrices based on finite geometry are given.
%{The numbers of rows} of the proposed matrices are {a bit more flexible than some existing deterministic constructions}.
Numerous experiments are presented and show that the proposed matrices perform empirically as well as, sometimes better than, the corresponding Gaussian matrices under both {BP and OMP}, even for the noisy situations.
Moreover, most of the proposed matrices could be put in either cyclic or quasi-cyclic form, thus making the hardware realization of sampling easier and simpler.
\item
\emph{Lower bounding the spark of a binary measurement matrix $H$.}
The spark of a measurement matrix is useful since it totally characterize the for-all performance of $\ell_0$-minimization.
We firstly obtain a new lower bound of ${\rm spark}(H)$ for general binary matrices, which improve the traditional one (\ref{generalspark}) in most cases. Afterwards, for the first class of binary matrices from finite geometry, we give two further improved lower bounds to show their relatively large spark. The fact that the proposed matrices have relatively large spark can explain \emph{to some extent} their empirically good performance under both BP and OMP.
\end{itemize}

{ After the submission of this paper, we realized that similar constructions via finite geometry are also proposed by Li and Ge \cite{slgg}. However, our work has been carried out independently and concurrently and differs from \cite{slgg} in three aspects.
\begin{itemize}
  \item All incidence matrices of $\mu_2$-flat over $\mu_1$-flat in finite geometry are covered in this paper, while only the line-point incidence matrices (i.e. $\mu_1=0$, $\mu_2=1$) are considered in \cite{slgg}, see Section \ref{subsec:classImat}.
  \item The parallel structure of Euclidean geometry is utilized here to obtain measurement matrices with a bit more diversified sizes and this is not included in \cite{slgg}, see Section \ref{sec:parallel}.
  \item The binary characteristic of the proposed matrices is used to get better spark bounds, while in \cite{slgg}, only coherence plays directly as the key tool for performance analysis and the binary characteristic is ignored, see Section \ref{mainresults}.
\end{itemize}
}

The rest of this paper is organized as follows.
Section \ref{fgldpc} gives a brief introduction to finite geometries and their parallel and quasi-cyclic structures, which result in the two classes of deterministic constructions.
Section \ref{mainresults} gives a lower bound of spark for general binary matrices and two further improved lower bounds for the proposed matrices from finite geometry.
Lots {{of}} simulations are given in Section \ref{simulation}.
Section \ref{conclusion} concludes the paper with some discussions.

\section{Measurement Matrices from Finite Geometries}
\label{fgldpc}
Finite geometry was used to construct several classes of parity-check matrices of LDPC codes which manifest excellent performance under iterative decoding \cite{klf} \cite{txla05}.
%We will see in the later sections that most of these structured matrices are also good measurement matrices in the sense that they often have considerably large spark and manifest comparable, sometimes better, performance to/than the corresponding Gaussian random matrices.
In the following, we briefly introduce some notations and results of finite geometry \cite{txla05}\cite[pp. 692-702]{fmns}.

Let $\mathbb{F}_q$ be a finite field with $q$ elements and
$\mathbb{F}_q^r$ be the $r$-dimensional vector space over
$\mathbb{F}_q$, where $r\ge 2$.
Let $EG(r,q)$ be the $r$-dimensional Euclidean geometry over
$\mathbb{F}_q$. $EG(r,q)$ has $q^r$ points,
which are vectors of $\mathbb{F}_q^r$. The $\mu$-flat in $EG(r,q)$
is a $\mu$-dimensional subspace of $\mathbb{F}_q^r$ or its coset.
Let $PG(r,q)$ be the $r$-dimensional projective geometry over
$\mathbb{F}_q$. $PG(r,q)$ is defined in
$\mathbb{F}_q^{r+1}\setminus\{\mathbf{0}\}$. Two nonzero vectors
$\mathbf{p,p'}\in \mathbb{F}_q^{r+1}$ are said to be equivalent if
there is $\lambda\in \mathbb{F}_q$ such that $\mathbf{p}=\lambda
\mathbf{p'}$. It is well known that all equivalence classes of
$\mathbb{F}_q^{r+1}\setminus\{\mathbf{0}\}$ form points of
$PG(r,q)$. $PG(r,q)$ has $(q^{r+1}-1)/(q-1)$ points. The
$\mu$-flat in $PG(r,q)$ is simply the set of equivalence classes contained
in a $({\mu}+1)$-dimensional subspace of $\mathbb{F}_q^{r+1}$.
In this paper, in order to present a unified approach, we use
$FG(r,q)$ to denote either $EG(r,q)$ or $PG(r,q)$.
A \emph{point} is a $0$-flat and a \emph{line} is a $1$-flat.

\subsection{Incidence Matrix in Finite Geometry}\label{subsec:classImat}
For $0\le \mu_1<\mu_2\le r$ \cite{txla05}, there are $N(\mu_2,\mu_1)$
$\;\mu_1$-flats contained in a given $\mu_2$-flat and
$A(\mu_2,\mu_1)$ $\;\mu_2$-flats containing a given $\mu_1$-flat,
where for $EG(r,q)$ and $PG(r,q)$ respectively
\begin{eqnarray}
\label{eq:fg1}
N_{EG}(\mu_2,\mu_1)&=& q^{\mu_2-\mu_1} \prod_{i=1}^{\mu_1} \frac{q^{\mu_2-i+1}-1}{q^{\mu_1-i+1}-1},
\end{eqnarray}
\begin{eqnarray}
\label{eq:fg2}
N_{PG}(\mu_2,\mu_1)&=& \prod_{i=0}^{\mu_1} \frac{q^{\mu_2-i+1}-1}{q^{\mu_1-i+1}-1},
\end{eqnarray}
\begin{eqnarray}
\label{eq:fg3} A_{EG}(\mu_2,\mu_1)=A_{PG}(\mu_2,\mu_1)=
\prod_{i=\mu_1+1}^{\mu_2} \frac{q^{r-i+1}-1}{q^{\mu_2-i+1}-1}.
\end{eqnarray}
Let $n=N(r,\mu_1)$ and $J=N(r,\mu_2)$ be the numbers of
$\mu_1$-flats and $\mu_2$-flats in $FG(r,q)$ respectively. The
$\mu_1$-flats and $\mu_2$-flats are indexed from $1$ to $n$ and $1$
to $J$ respectively. The \emph{incidence matrix $H=(h_{ji})$ of $\mu_2$-flat over $\mu_1$-flat} is a binary $J\times n$
matrix, where $h_{ji}=1$ for $1\le j\le J$ and $1\le i \le n$ if and
only if the $j$th $\mu_2$-flat contains the $i$th $\mu_1$-flat. The
rows of $H$ correspond to all the $\mu_2$-flats in $FG(r,q)$ and the columns of $H$ correspond
to all the $\mu_1$-flats in $FG(r,q)$.
Moreover, $H$ is a $(\gamma,\rho)$-regular matrix, where
\begin{eqnarray}
\gamma=A(\mu_2,\mu_1),\quad\rho=N(\mu_2,\mu_1).
\end{eqnarray}
{Since an $m\times n$ measurement matrix in CS should satisfy $m<n$, we construct the \emph{class-I finite geometry measurement matrix} as follows.
\begin{itemize}
  \item If $J<n$, use $H$ directly as the measurement matrix, and $H$ is called the \emph{type-I} (in class-I) \emph{finite geometry measurement matrix}.
  \item If $J>n$, use $H^T$ directly as the measurement matrix, and $H^T$ is called the \emph{type-II} (in class-I) \emph{finite geometry measurement matrix}.
\end{itemize}

Using the properties of finite geometry, it is easy to find that the inner product of two different columns of $H$ equals to the number of $\mu_{2}$-flats containing two fixed $\mu_{1}$-flats simultaneously, whose maximum value is $A(\mu_2,\mu_1+1)$; while the inner product of two different rows of $H$ equals to the number of $\mu_{1}$-flats contained by two fixed $\mu_2$-flats simultaneously, whose maximum value is $N(\mu_2-1,\mu_1)$.
Therefore, we have the following result.
\begin{Proposition}\label{pro:cohfg}
Let $r, \mu_1,\mu_2$ be integers, $0\le
\mu_1<\mu_2<r$, $H$ be the type-I finite geometry measurement matrix and $H^T$ be the type-II finite geometry measurement matrix. Then
  \begin{eqnarray}
    \mu(H) = \frac{A(\mu_{2}, \mu_{1}+1)}{A(\mu_{2}, \mu_{1})},\\
    \mu(H^T) = \frac{N(\mu_2-1,\mu_1)}{N(\mu_2,\mu_1)}.
  \end{eqnarray}
\end{Proposition}

}

For any pair $(J,n)$, whether $J>n$, $J<n$ or $J=n$, we could construct another large class of measurement matrices with a bit more diversified sizes by removing some rows and columns from $H$ or $H^T$ \emph{in a deterministic way}, and we call them \emph{the class-II finite geometry measurement matrices}.
In this paper, an efficient method to remove rows and columns deterministically from the class-I finite geometry measurement matrices is proposed and it makes use of the the parallel structure in Euclidean Geometry.
\subsection{Measurement Matrices from the Parallel Structure in Euclidean Geometry}
\label{sec:parallel}
Since a projective geometry does not have the parallel structure, we concentrate on $EG(r,q)$ only. In $EG(r,q)$, a $\mu$-flat contains $q^\mu$ points and two $\mu$-flats are either disjoint or intersecting on a flat with dimension at most $\mu-1$. The $\mu$-flats that
correspond to the cosets of a $\mu$-dimensional subspace of $\mathbb{F}_q^r$ (including the subspace itself) are said to be \emph{parallel} to each other
and form a \emph{parallel $\mu$-flat bundle}. The $\mu$-flats within a parallel $\mu$-flat bundle are disjoint and contain all the points of $EG(r,q)$ with each point appearing
once and only once. The number of $\mu$-flats in a parallel $\mu$-flat bundle is $q^{r-\mu}$.

There are totally $J=N(r,\mu_2)$ $\mu_{2}$-flats which consist of  $K=J /q^{r-\mu_{2}}$ parallel $\mu_2$-flat bundles in $EG(r,q)$. We index these parallel bundles from $1$ to $K$. Consider the $J\times n$ incidence matrix $H$ of $\mu_{2}$-flat over $\mu_{1}$-flat. All $J$ rows of $H$ could be divided into $K$ bundles each of which contains $q^{r-\mu_2}$ rows, i.e., by suitable row arrangement, $H$ could be written as
\begin{equation}
\label{Hpara}
H=(H_1^T,H_2^T,\ldots,H_{K}^T)^{T},
\end{equation}
where $H_i$ ($1\le i\le K$) is a $q^{r-\mu_{2}}\times n$ submatrix of $H$ corresponding to the $i$-th parallel $\mu_2$-flat bundle.
Similarly, the columns of $H$, which correspond to all the $n$ $\mu_1$-flats, can also be ordered according to the parallel $\mu_1$-flat bundles in $EG(r, q)$. By deleting some rows or columns corresponding to the parallel bundles from $H$, and transposing the obtained submatrix if needed, we could construct a large amount of measurement matrices with various sizes.

In the following, using the \emph{Euclidean plane $EG(2,q)$}, we give a detailed example to show the above method to remove rows and columns from a class-I Euclidean geometry matrix.
$EG(2,q)$ has $n = q^2$ points, $J = (q^2+q)$ lines and a $(q^2+q)\times q^2$ line-point incidence matrix $H$.
All the $J$ lines can be divided into $(q+1)$ parallel line bundles each of which consists of $q$ lines. All the $n$ points are parallel to each other and form a trivial parallel point bundle with $n$ points. By (\ref{Hpara}), $H$ can be arranged as
\begin{equation}\label{eq:EGmat}
  H=(H_1^T, H_2^T, \ldots, H_{q+1}^T)^{T},
\end{equation}
where for $i=1,\ldots,q+1$, the $q$ rows of $H_i$ correspond to the $q$ lines in the $i$-th parallel line bundle.

Next, we will remove some rows and columns of $H$ according to the parallel structure of the lines in $EG(2,q)$.
Firstly, by simply choosing the first $\gamma$ submatrices $H_i$, $1\leq i\leq\gamma$ and $\gamma>1$, we can construct a $\gamma q\times q^2$ matrix
\begin{equation}
{H(\gamma,q,q)}=(H_1^T, H_2^T, \ldots, H_{\gamma}^T)^{T}.
\end{equation}
Since every point occurs exactly once in each parallel line bundle and every line contains exactly $q$ points, $H(\gamma,q,q)$ is $(\gamma,q)$-regular.
In the following, we delete some columns of $H(\gamma,q,q)$ in the way such that it keeps the regularity of the resulting matrix.
Recall that for any fixed $H_{i}$, its corresponding $q$ lines are parallel to each other and partition the geometry.
Let $\rho$ be an integer with $\gamma\le\rho\leq q$. Select the first $(q - \rho)$ lines from the $(\gamma+1)$-th parallel line bundle which contain exactly $q(q - \rho)$ points.
Remove the $q(q - \rho)$ columns of $H(\gamma,q,q)$ corresponding to the $q(q - \rho)$ points lying on the selected $(q-\rho)$ lines, and then we can obtain a $\gamma q\times \rho q$ submatrix $H(\gamma,\rho,q)$, i.e., the class--II finite geometry measurement matrix.
It is easy to see that the $q$ points on the same line of the $(\gamma+1)$-th parallel line bundle lie on exactly $q$ lines of each of the first $\gamma$ parallel line bundles.
Therefore, $H(\gamma,\rho,q)$ is $(\gamma,\rho)$-regular.
Considered that every two points in $EG(2,q)$ lie on exactly one line, the maximum inner product of any two different columns of $H(\gamma,\rho,q)$ will equal 1 as long as $\rho\ge2$.
Hence, we have the following result.
\begin{Proposition}
  Let $H(\gamma,\rho,q)$ be the $\gamma q\times \rho q$ class--II finite geometry measurement matrix from $EG(2,q)$, $1<\gamma\le \rho\le q$,
  \begin{equation}\label{eq:cohHsub}
  \mu(H(\gamma,\rho,q))=\frac{1}{\gamma}.
  \end{equation}
\end{Proposition}

\begin{Remark}\label{kscaling}
Let $\gamma = cq$ and $\rho=q$, where $0<c\le1$ is a constant, then $H(\gamma,\rho,q)$ will be a binary matrix with $m=cn$ rows, $n=q^2$ columns, and coherence $\mu(H(\gamma,\rho,q))=\frac{1}{\gamma}=\sqrt{\frac{1}{cm}}$.
Therefore, according to (\ref{kl0}) and (\ref{kl1}), signals measured by $H(\gamma,\rho,q)$ and with sparsity $k_{\ell_0}/k_{\ell_1}=O(\sqrt{m})$ can be exactly recovered by both $\ell_0$-optimization and $\ell_1$-optimization, where $k_{\ell_0}$ and $k_{\ell_1}$ almost approach the square-root bottleneck, although with a gap of constant.
\end{Remark}
\begin{Remark}
  Suppose we want to construct an $m\times n$ class--II finite geometry measurement matrix, where $m$ and $n$ can be written as $m=\gamma_1 q_1 = \gamma_2 q_2$ and $n= \rho_1 q_1 = \rho_2 q_2$ at the same time. Suppose $q_1<q_2$, then we should choose $H(\gamma_1,\rho_1,q_1)$ since it will have lower coherence according to (\ref{eq:cohHsub}) and larger spark lower bound according to Theorem \ref{preth} in Section \ref{sec:generalbound}.
\end{Remark}

\subsection{Cyclic and Quasi-cyclic Structure in Finite Geometries}
Apart from the parallel structure of Euclidean geometry, most of the incidence matrices in Euclidean geometry and projective geometry also have cyclic or quasi-cyclic structure \cite{txla05}. This is accomplished
by grouping the flats of two different dimensions of
a finite geometry into cyclic classes. For a Euclidean geometry, only the flats not
passing through the origin are used for matrix construction. Based on this grouping of rows and columns, the incidence matrix in finite geometry consists of square submatrices (or blocks), and each of these square submatrices is a circulant matrix in which each row is a cyclic shift of the row above it and the first row is the cyclic shift of the last row.
%Note that by puncturing the row blocks or column blocks of the incidence matrices, the remained submatrices are often as regular as possible.
Moreover, this skill is also compatible with the parallel structure of Euclidean geometry. Hence, the sampling process with these measurement
matrices is easy and can be achieved with linear shift registers. For more details, please refer to \cite[Appendix A]{txla05}.

\section {Spark Analysis of Binary Matrices}
\label{mainresults}
As has been stated in (\ref{proeq1}), ${\rm spark}(A)>2k$ is the necessary and sufficient condition for $\ell_0$-minimization to perfectly recover any $k$-sparse signal.
While choosing measurement matrices, those with large sparks are intuitively preferred as good candidates.
However, the computation of spark is generally NP-hard \cite{mgdj}.
In this section, we give several new lower bounds of the spark for general binary matrices and finite geometry matrices.
The relatively large spark lower bounds of the proposed matrices {can explain their empirical good performance in Section \ref{simulation} to some extent}.

\subsection{Lower Bound of Spark for General Binary Matrices}\label{sec:generalbound}
For a real vector $\textit{\textbf{x}}\in\mathbb{R}^{n}$, the support of $\textit{\textbf{x}}$ is defined by the set of non-zero positions, i.e.,
${\rm supp}(\textit{\textbf{x}})\triangleq \{i: x_{i}\neq 0\}$.

Consider a binary $m\times n$ matrix $H$ with minimum column weight $\gamma>0$. Suppose the maximum inner product of any two different columns of $H$ is $\lambda>0$.
By (\ref{defcoh}), we have $\mu(H)\leq\frac{\lambda}{\gamma}.$
According to the lower bound (\ref{generalspark}) from \cite{ddme},
\begin{equation}
\label{binarysparkold}
 {\rm spark}(H)\geq 1+\frac{\gamma}{\lambda}.
\end{equation}
{In addition, from (\ref{kl0}) and (\ref{kl1}), any signal with sparsity
\begin{equation}\label{kl0binary}
  k_{\ell_0}<0.5+\frac{0.5\gamma}{\lambda}
\end{equation}
and
\begin{equation}\label{kl1binary}
  k_{\ell_1}<0.667+\frac{0.385\gamma}{\lambda}
\end{equation}
can get perfect recovery under $\ell_0$-minimization and $\ell_1$-minimization, respectively.}

As a matter of fact, for the general binary matrix $H$, we often have a tighter lower bound of its spark.
\begin{Theorem}
\label {preth}
Let $H$ be a binary $m\times n$ matrix with minimum column weight $\gamma>0$, and suppose the maximum inner product of any two different columns of $H$ is $\lambda>0$. Then
\begin{equation}
\label{thgeneral}
  {\rm spark}(H)\geq \frac{2\gamma}{\lambda}.
\end{equation}
\end{Theorem}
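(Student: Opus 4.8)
The plan is to argue directly from the definition of $spark$, exploiting the sign pattern of a minimum-support null-space vector rather than going through coherence. Let $\textbf{\textit{w}}\in Nullsp_{\mathbb{R}}^{*}(H)$ attain the minimum in (\ref{spark}), so that $d:=\|\textbf{\textit{w}}\|_{0}=spark(H)$, and set $S=supp(\textbf{\textit{w}})$. Split $S$ into $S^{+}=\{i\in S: w_{i}>0\}$ and $S^{-}=\{i\in S: w_{i}<0\}$, with $d_{+}=|S^{+}|$, $d_{-}=|S^{-}|$, so $d=d_{+}+d_{-}$. The first thing to check is that both classes are nonempty: if $\textbf{\textit{w}}$ were, say, entrywise nonnegative, then every coordinate of $H\textbf{\textit{w}}=\mathbf{0}$ would force each row of $H$ to vanish on $S$, making every column indexed by $S$ the zero column, which is impossible since $\gamma>0$. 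Hence $d_{+},d_{-}\ge 1$.

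Next I would use the column orthogonality relations. Writing $\textbf{\textit{h}}_{1},\dots,\textbf{\textit{h}}_{n}$ for the columns of $H$ and $c_{ij}=\langle\textbf{\textit{h}}_{i},\textbf{\textit{h}}_{j}\rangle$, the identity $\textbf{\textit{h}}_{j}^{T}(H\textbf{\textit{w}})=0$ gives $\sum_{i\in S}w_{i}c_{ij}=0$ for every $j\in S$. Fixing $j\in S^{+}$ and separating the positive and negative contributions yields $\sum_{i\in S^{+}}w_{i}c_{ij}=\sum_{i\in S^{-}}|w_{i}|c_{ij}$. On the left, the diagonal term alone is $w_{j}c_{jj}=w_{j}\|\textbf{\textit{h}}_{j}\|_{2}^{2}\ge\gamma w_{j}$ while all remaining terms are nonnegative (binary columns, $w_{i}>0$); on the right each $c_{ij}\le\lambda$ since $i\ne j$. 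Therefore $\gamma w_{j}\le\lambda\sum_{i\in S^{-}}|w_{i}|$ for all $j\in S^{+}$, and choosing $j$ to maximize $w_{j}$ gives $\gamma M_{+}\le\lambda W_{-}$, where $M_{+}=\max_{i\in S^{+}}w_{i}$ and $W_{-}=\sum_{i\in S^{-}}|w_{i}|$. The symmetric argument applied to some $j\in S^{-}$ gives $\gamma M_{-}\le\lambda W_{+}$, with $M_{-}=\max_{i\in S^{-}}|w_{i}|$ and $W_{+}=\sum_{i\in S^{+}}w_{i}$.

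Finally I would combine the two estimates. Using the trivial bounds $W_{+}\le d_{+}M_{+}$ and $W_{-}\le d_{-}M_{-}$, multiplying $\gamma M_{+}\le\lambda d_{-}M_{-}$ and $\gamma M_{-}\le\lambda d_{+}M_{+}$ and cancelling the positive factor $M_{+}M_{-}$ gives $\gamma^{2}\le\lambda^{2}d_{+}d_{-}$, hence $\gamma/\lambda\le\sqrt{d_{+}d_{-}}\le(d_{+}+d_{-})/2=d/2$ by the AM--GM inequality, i.e.\ $spark(H)=d\ge 2\gamma/\lambda$. I do not expect a genuine obstacle here once the idea is in hand; the only points needing care are the non-emptiness of $S^{+}$ and $S^{-}$ (this is exactly where the hypothesis $\gamma>0$, and not merely $\lambda>0$, is used) and the bookkeeping that isolates the diagonal term $\|\textbf{\textit{h}}_{j}\|_{2}^{2}\ge\gamma$ while keeping the remaining same-sign off-diagonal terms on the favorable side. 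It is worth remarking that the factor-$2$ gain over the coherence bound $1+\gamma/\lambda$ comes precisely from multiplying the two opposite-sign inequalities and invoking $\sqrt{d_{+}d_{-}}\le d/2$; a single-column argument only recovers $d\ge 1+\gamma/\lambda$.
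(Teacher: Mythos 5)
Your proof is correct, and it takes a genuinely different route from the paper's. The paper fixes $j$ in the (WLOG larger) positive sign class, extracts the $\gamma\times(1+|S^{-}|)$ submatrix formed by the rows supporting column $j$ and the columns indexed by $\{j\}\cup S^{-}$, and double-counts its ones: the null-space relation forces every such row to meet $S^{-}$, giving $2\gamma\le\gamma+\lambda|S^{-}|$, i.e.\ the \emph{smaller} sign class already has at least $\gamma/\lambda$ elements, whence $d\ge 2\gamma/\lambda$. You instead test $H\textbf{\textit{w}}=\mathbf{0}$ against each column, i.e.\ work with the Gram matrix $H^{T}H$, keep the actual weights $w_{i}$, and derive the two cross inequalities $\gamma M_{+}\le\lambda d_{-}M_{-}$ and $\gamma M_{-}\le\lambda d_{+}M_{+}$, which you multiply and close with AM--GM. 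The trade-off: the paper's unweighted counting yields the strictly stronger intermediate fact $\min(d_{+},d_{-})\ge\gamma/\lambda$ (both sign classes are large), whereas your argument only controls the product $d_{+}d_{-}\ge(\gamma/\lambda)^{2}$ --- which is all the theorem needs. In exchange, your version avoids the submatrix extraction entirely, is symmetric in the two sign classes (no WLOG), makes explicit where $\gamma>0$ as opposed to $\lambda>0$ is used (non-emptiness of $S^{+}$ and $S^{-}$, a point the paper leaves implicit), and isolates cleanly why the bound doubles the coherence bound $1+\gamma/\lambda$. Both proofs are valid; yours is a legitimate alternative, and the paper's gives slightly more structural information about minimum-support null vectors.
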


\begin{proof}
For any $\textit{\textbf{w}}=(w_{1}, w_{2}, \ldots, w_{n})\in {\rm Nullsp}_{\mathbb{R}}^{*}(H)$, we split
the non-empty set $\rm {\rm supp}(\textit{\textbf{w}})$ into two parts $\rm {\rm supp}(\textit{\textbf{w}}^{+})$ and $\rm {\rm supp}(\textit{\textbf{w}}^{-})$,
\begin{eqnarray}
\label{split1}
\rm {\rm supp}(\textit{\textbf{w}}^{+})&\triangleq&\{i: w_{i}>0\},\\
\label{split2}
\rm {\rm supp}(\textit{\textbf{w}}^{-})&\triangleq&\{i: w_{i}<0\}.
\end{eqnarray}
Without loss of generality, we assume that $|\rm {\rm supp}(\textit{\textbf{w}}^{+})|\geq|\rm {\rm supp}(\textit{\textbf{w}}^{-})|$.
For fixed $j\in \rm {\rm supp}(\textit{\textbf{w}}^{+})$, by selecting the $j$-th column of $H$ and all
the columns in $\rm {\rm supp}(\textit{\textbf{w}}^{-})$ of $H$, we get a submatrix $H(j)$.
Since the column weight of $H$ is at least $\gamma$, we could select $\gamma$ rows of $H(j)$ to form a $\gamma\times(1+|\rm {\rm supp}(\textit{\textbf{w}}^{-})|)$ submatrix of $H$, say $H(\gamma,j)$, where the column corresponds to $j$ is all 1 column. Now let's count the total number of 1's of $H(\gamma,j)$ in two ways.
\begin{itemize}
\item
From the view of columns, since the maximum inner product of any two different columns of $H$ is $\lambda$, each of the columns of $H(\gamma,j)$ corresponds to $\rm {\rm supp}(\textit{\textbf{w}}^{-})$ has at most $\lambda$ 1's. So the total number is at most $\gamma+\lambda|\rm {\rm supp}(\textit{\textbf{w}}^{-})|$.
\item
From the view of rows, we claim that there is at least two 1's in each row of $H(\gamma,j)$, which implies the total number is at least $2\gamma$ 1's. The claim is shown as follows.
Let $\textit{\textbf{h}}(j)$ be any row of $H(\gamma,j)$ and $\textit{\textbf{h}}=(h_1,\ldots,h_n)$ be its corresponding row in $H$. Note that $h_j=1$. Since $\textit{\textbf{w}}\in {\rm Nullsp}_{\mathbb{R}}^{*}(H)$,
\begin{eqnarray*}
0=\sum_{i\in \rm {\rm supp}(\textit{\textbf{w}})}w_i h_i=\sum_{i\in \rm supp(\textit{\textbf{w}}^+)}w_i h_i+\sum_{i\in \rm supp(\textit{\textbf{w}}^-)}w_i h_i,
\end{eqnarray*}
which implies that
\begin{eqnarray*}
-\sum_{i\in \rm supp(\textit{\textbf{w}}^-)}w_i h_i=\sum_{i\in \rm supp(\textit{\textbf{w}}^+)}w_i h_i\ge w_j>0.
\end{eqnarray*}
So there are at least one 1's in $\{h_i : i\in \rm supp(\textit{\textbf{w}}^-)\}$ and $\textit{\textbf{h}}(j)$ has at least two 1's.
\end{itemize}
Therefore, $2\gamma\leq \gamma+\lambda|\rm {\rm supp}(\textit{\textbf{w}}^{-})|$, which implies that $|\rm {\rm supp}(\textit{\textbf{w}}^{-})|\geq\frac{\gamma}{\lambda}$.  Since $|\rm {\rm supp}(\textit{\textbf{w}}^{+})|\geq|\rm {\rm supp}(\textit{\textbf{w}}^{-})| \geq\frac{\gamma}{\lambda}$, $$|\rm {\rm supp}(\textit{\textbf{w}})|=|\rm {\rm supp}(\textit{\textbf{w}}^{+})|+|\rm {\rm supp}(\textit{\textbf{w}}^{-})| \geq \frac{2\gamma}{\lambda}$$ and the conclusion (\ref{thgeneral}) follows.
\end{proof}
%\begin{Remark}
%\label{remark1}
%For any binary matrix $A\in \mathbb{R}^{M\times N}$ with constant column weight $\gamma>0$ and no same columns,
%the lower bound of the ${\rm spark}(A)$ in Theorem \ref{preth} is\emph{ strictly tighter} than that in (\ref{generalspark}).
%\end{Remark}
%\begin{proof}
%It is easy to see that for any binary matrix $A$ with constant column weight
%$\gamma$, if we denote the maximum inner product of any two different columns of $A$
%by $\lambda$, then the coherence $\mu(A)$ of $A$ equals to
%$\frac{\lambda}{\gamma}$. Therefore, according to (\ref{generalspark}), the lower
%bound of  ${\rm spark}(A)$ is $1+\frac{\gamma}{\lambda}$. However, by (\ref{thgeneral})
%from Theorem \ref{preth}, the corresponding lower bound is $2\frac{\gamma}{\lambda}$.
%Since $2\frac{\gamma}{\lambda}>1+\frac{\gamma}{\lambda}$ as long as there
%are no same columns in $A$, the lower bound in Theorem \ref{preth}
%for binary matrices with constant column weight is\emph{ strictly tighter} than that in
%(\ref{generalspark}).
%\end{proof}
\begin{Remark}\label{rem:extend}
Obviously, the lower bound (\ref{thgeneral}) is tighter than (\ref{binarysparkold}).
Combining (\ref{thgeneral}) with (\ref{proeq1}), we have that any signal with sparsity
\begin{equation}\label{kl0new}
  k_{\ell_0}<\frac{\gamma}{\lambda}.
\end{equation}
can be exactly recovered by $\ell_0$-minimization, which improves (\ref{kl0binary}) by a factor of about 2.
{Particularly, if $H$ has \emph{uniform} column weight $\gamma$, $\mu(H)=\frac{\lambda}{\gamma}$ and (\ref{thgeneral}) is equivalent to
\begin{equation}
\label{thgeneralmu}
  {\rm spark}(H)\geq \frac{2}{\mu(H)},
\end{equation}
which improves (\ref{generalspark}).
Moreover, in a subsequent work \cite{xlsx1}, we further prove that all signals with sparsity
\begin{equation}\label{kl1new}
  k_{\ell_1}<\frac{\gamma}{\lambda}=\frac{1}{\mu(H)}
\end{equation} can obtain perfect recovery under $\ell_1$-minimization, which is better than (\ref{kl1binary}) and (\ref{kl1}) implied by the coherence}.
{Therefore, any signal measured by the class--II finite geometry measurement matrix $H(\gamma,\rho,q)$ and with sparsity
\begin{equation}
  k_{\ell_0} = k_{\ell_1} < \gamma
\end{equation}
can be perfectly recovered by both $\ell_0$-minimization and $\ell_1$-minimization.}
Finally, the coincidence of (\ref{kl0new}) and (\ref{kl1new}) reveals that perhaps in some cases, the results of spark and $\ell_0$-minimization could be strengthened to the corresponding results of $\ell_1$-minimization.
\end{Remark}
\begin{Remark}
Consider a complete graph on $4$ vertices with point-line incidence matrix:
\begin{eqnarray*}
H = \left(
\begin{array}{cccccc}
1&1&1&0&0&0 \\
1&0&0&1&1&0 \\
0&1&0&1&0&1 \\
0&0&1&0&1&1
\end{array}
\right).
\end{eqnarray*}
Clearly, $\gamma=2,\lambda=1$ and ${\rm spark}(H)\geq 4$ according to (\ref{thgeneral}). Moreover, since $(1,-1,0,0,-1,1)\in {\rm Nullsp}_{\mathbb{R}}^{*}(H)$, ${\rm spark}(H)=4$, which means that the lower bound (\ref{thgeneral}) could be achieved.
\end{Remark}

\subsection{Lower Bounds of Spark for the class--I Finite Geometry Matrices}
Theorem \ref{preth} does apply to all matrices constructed in Section \ref{fgldpc} from finite geometry. In this part, {we will} show that for the class--I finite geometry measurement matrices, better spark lower bounds could be obtained.

Let $H$ be the type-I $J\times n$ incidence matrix of $\mu_{2}$-flat over $\mu_{1}$-flat in $FG(r, q)$, where $0\leq\mu_{1}<\mu_{2}<r$, $n=N(r, \mu_{1})$ and $J=N(r, \mu_{2})$.
%Recall that $H$ or $H^T$ are called respectively the type-I or tyep-II finite geometry measurement matrix.
%The following lemma is needed to establish our results.
\begin{Lemma} \cite{sxff}\quad
\label{lem1}
Let $0\le \mu_1<\mu_2< r$ and $1\le l\le
A(\mu_2,\mu_2-1)$. Given any $l$ different $\mu_1$-flats $\mathcal{F}_1, \mathcal{F}_2, \ldots, \mathcal{F}_l$ in $FG(r,q)$ and for any $1\le j\le l$, there exists one $(\mu_2-1)$-flat $\mathcal{F}$ such that $\mathcal{F}_j \subseteq \mathcal{F}$ and $\mathcal{F}_i\not\subseteq \mathcal{F}$ for all $i=1,\ldots, j-1, j+1,\ldots l$.
\end{Lemma}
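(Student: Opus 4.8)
The plan is to prove Lemma \ref{lem1} by a counting (union-bound) argument inside $FG(r,q)$, using only the flat-counting numbers $N(\cdot,\cdot)$, $A(\cdot,\cdot)$ from (\ref{fg1})--(\ref{fg3}) together with two elementary incidence facts: (i) any flat containing a given set of points contains the smallest flat (the join) it generates, and (ii) if $P$ is a point not on a $\mu$-flat $\mathcal{G}$, then the flat generated by $\mathcal{G}\cup\{P\}$ is a $(\mu+1)$-flat. The case $\mu_2=\mu_1+1$ is immediate: a $(\mu_2-1)$-flat is then a $\mu_1$-flat, and $\mathcal{F}:=\mathcal{F}_j$ itself works, because a $\mu_1$-flat contained in $\mathcal{F}_j$ must equal $\mathcal{F}_j$, so $\mathcal{F}_i\not\subseteq\mathcal{F}_j$ for all $i\ne j$. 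From now on assume $\mu_2\ge\mu_1+2$.

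Fix $j$. For every $i\ne j$, since $\mathcal{F}_i\ne\mathcal{F}_j$ and both are $\mu_1$-flats, we have $\mathcal{F}_i\not\subseteq\mathcal{F}_j$, so we may pick a point $P_i\in\mathcal{F}_i\setminus\mathcal{F}_j$. It now suffices to find a single $(\mu_2-1)$-flat $\mathcal{F}$ with $\mathcal{F}_j\subseteq\mathcal{F}$ and $P_i\notin\mathcal{F}$ for all $i\ne j$, because $P_i\in\mathcal{F}_i$ then forces $\mathcal{F}_i\not\subseteq\mathcal{F}$. The number of $(\mu_2-1)$-flats containing $\mathcal{F}_j$ is $A(\mu_2-1,\mu_1)$, while for each fixed $i$ the $(\mu_2-1)$-flats containing both $\mathcal{F}_j$ and $P_i$ are, by facts (i)--(ii), exactly those containing the $(\mu_1+1)$-flat generated by $\mathcal{F}_j\cup\{P_i\}$, of which there are $A(\mu_2-1,\mu_1+1)$. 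Hence, by the union bound, the number of $(\mu_2-1)$-flats containing $\mathcal{F}_j$ but avoiding all of $P_1,\dots,P_{j-1},P_{j+1},\dots,P_l$ is at least
\[
A(\mu_2-1,\mu_1)-(l-1)\,A(\mu_2-1,\mu_1+1),
\]
and it remains to show this is strictly positive whenever $1\le l\le A(\mu_2,\mu_2-1)$.

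Equivalently, since $l-1\le A(\mu_2,\mu_2-1)-1$, it suffices to prove
\[
\frac{A(\mu_2-1,\mu_1)}{A(\mu_2-1,\mu_1+1)}>A(\mu_2,\mu_2-1)-1 .
\]
By (\ref{fg3}) (which gives the same value for $EG$ and $PG$), the left side simplifies to $\dfrac{q^{r-\mu_1}-1}{q^{\mu_2-\mu_1-1}-1}$ and the right side equals $\dfrac{q^{r-\mu_2+1}-1}{q-1}-1=\dfrac{q\,(q^{r-\mu_2}-1)}{q-1}$. Writing $a=r-\mu_2\ge1$ and $b=\mu_2-\mu_1-1\ge1$, so that $r-\mu_1=a+b+1$, and clearing the positive denominators $q^b-1$ and $q-1$, the claim reduces to
\[
(q-1)\,(q^{a+b+1}-1)>q\,(q^{a}-1)(q^{b}-1),
\]
which after expansion is $q^{a+b+1}(q-2)+q^{a+1}+q^{b+1}-2q+1>0$; this holds for every prime power $q\ge2$ and all $a,b\ge1$ (the first term is $\ge0$ and $q^{a+1}+q^{b+1}\ge 2q^2>2q-1$). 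Any $(\mu_2-1)$-flat $\mathcal{F}$ furnished by this count then satisfies the conclusion of the lemma.

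The main obstacle I anticipate is bookkeeping rather than depth: one must verify that the counts $A(\mu_2-1,\mu_1)$ and $A(\mu_2-1,\mu_1+1)$ genuinely do not depend on the chosen flats — this is the homogeneity of finite geometries, which is exactly what (\ref{fg1})--(\ref{fg3}) encode — keep the degenerate range $\mu_2=\mu_1+1$ out of the ratio step, and carry the final elementary inequality with a \emph{strict} sign so that the union bound actually leaves a nonempty admissible family. No structural distinction between $EG(r,q)$ and $PG(r,q)$ is needed, since $A_{EG}=A_{PG}$.
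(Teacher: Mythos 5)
Your proof is correct. Note that the paper does not actually prove Lemma \ref{lem1} --- it is quoted from \cite{sxff} --- so your argument is a self-contained substitute rather than a reproduction of anything in this text. The structure holds up: the case $\mu_2=\mu_1+1$ is correctly dispatched by taking $\mathcal{F}=\mathcal{F}_j$ (equal-dimensional flats contain one another only if equal); for $\mu_2\ge\mu_1+2$ the reduction to avoiding one representative point $P_i\in\mathcal{F}_i\setminus\mathcal{F}_j$ per flat is legitimate, the count of $(\mu_2-1)$-flats through $\mathcal{F}_j$ and $P_i$ is indeed $A(\mu_2-1,\mu_1+1)$ because the join of a $\mu_1$-flat and an external point is a $(\mu_1+1)$-flat in both $EG(r,q)$ and $PG(r,q)$, and the union bound together with the inequality
\[
(q-1)\left(q^{a+b+1}-1\right)>q\left(q^{a}-1\right)\left(q^{b}-1\right),\qquad a=r-\mu_2\ge 1,\; b=\mu_2-\mu_1-1\ge 1,
\]
(which I have checked reduces to $q^{a+b+1}(q-2)+q^{a+1}+q^{b+1}-2q+1>0$, valid for all $q\ge 2$) leaves a strictly positive number of admissible $(\mu_2-1)$-flats whenever $l\le A(\mu_2,\mu_2-1)$. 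Possible coincidences among the $P_i$ only overcount the excluded flats and do not hurt the bound. Compared with the source \cite{sxff}, which works with the flats $\mathcal{F}_i$ themselves, your pointwise surrogate is slightly more wasteful in the exclusion count but stays comfortably within the hypothesis $l\le A(\mu_2,\mu_2-1)$, and it buys a proof that needs only the two elementary incidence facts you state plus the formula (\ref{fg3}); in particular it treats $EG$ and $PG$ uniformly, matching the paper's unified $FG(r,q)$ convention.
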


\begin{Theorem}
\label{thm1} Let $r, \mu_1,\mu_2$ be integers, $0\le
\mu_1<\mu_2<r$ and $H$ be the type-I finite geometry measurement matrix. Then
\begin{eqnarray}
{{\rm spark}(H)}\ge 2A(\mu_2,\mu_2-1),
\label{eqthm1}
\end{eqnarray}
where
\begin{eqnarray*}
A(\mu_2,\mu_2-1)&=&\frac{q^{r-\mu_2+1}-1}{q-1}.
\end{eqnarray*}
\end{Theorem}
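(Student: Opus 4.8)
The plan is to mimic the two-way counting argument of Theorem \ref{preth}, but to replace the crude coherence bound $\lambda=\max$ inner product by the much sharper geometric fact in Lemma \ref{lem1}. Fix any $\textbf{\textit{w}}\in Nullsp_{\mathbb{R}}^{*}(H)$ and, exactly as in the proof of Theorem \ref{preth}, split $supp(\textbf{\textit{w}})$ into $supp(\textbf{\textit{w}}^{+})$ and $supp(\textbf{\textit{w}}^{-})$, assuming without loss of generality that $|supp(\textbf{\textit{w}}^{+})|\ge|supp(\textbf{\textit{w}}^{-})|$. Here the columns of $H$ are indexed by $\mu_1$-flats of $FG(r,q)$ and the rows by $\mu_2$-flats; the key extra structure is that each $\mu_2$-flat contains many $(\mu_2-1)$-flats, and through each $(\mu_2-1)$-flat there pass exactly $A(\mu_2,\mu_2-1)=(q^{r-\mu_2+1}-1)/(q-1)$ many $\mu_2$-flats.

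The first main step is to show $|supp(\textbf{\textit{w}}^{-})|\ge A(\mu_2,\mu_2-1)$. Set $l:=|supp(\textbf{\textit{w}}^{-})|$ and list the corresponding $\mu_1$-flats as $\mathcal{F}_1,\dots,\mathcal{F}_l$. If $l\le A(\mu_2,\mu_2-1)$, then for each index $j$ Lemma \ref{lem1} supplies a $(\mu_2-1)$-flat $\mathcal{F}$ with $\mathcal{F}_j\subseteq\mathcal{F}$ but $\mathcal{F}_i\not\subseteq\mathcal{F}$ for $i\ne j$; every $\mu_2$-flat through $\mathcal{F}$ then gives a row of $H$ that contains a `1' in column $\mathcal{F}_j$ and in no other column of $supp(\textbf{\textit{w}}^{-})$. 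But since $\textbf{\textit{w}}\in Nullsp_{\mathbb{R}}^{*}(H)$, such a row $\textbf{\textit{h}}$ forces
\begin{equation*}
-\,w_{\mathcal{F}_j} h_{\mathcal{F}_j}=\sum_{i\in supp(\textbf{\textit{w}}^{+})} w_i h_i,
\end{equation*}
so this row must also hit $supp(\textbf{\textit{w}}^{+})$ — in fact for \emph{every} choice of index $j$ we reach a contradiction only if we cannot balance, so I expect the clean route is: pick a column index $j^{*}\in supp(\textbf{\textit{w}}^{-})$ with $|w_{j^{*}}|$ maximal among negative coordinates, apply the lemma to isolate $\mathcal{F}_{j^{*}}$, and derive that one of the $A(\mu_2,\mu_2-1)$ rows through the resulting $(\mu_2-1)$-flat cannot be balanced by $supp(\textbf{\textit{w}}^{+})$ either — contradicting $supp(\textbf{\textit{w}}^{+})$-membership. (Alternatively, run the symmetric argument on $supp(\textbf{\textit{w}}^{+})$, which is at least as large, to squeeze out the factor directly.) Either way one concludes $l=|supp(\textbf{\textit{w}}^{-})|\ge A(\mu_2,\mu_2-1)$, and then
\begin{equation*}
|supp(\textbf{\textit{w}})|=|supp(\textbf{\textit{w}}^{+})|+|supp(\textbf{\textit{w}}^{-})|\ge 2A(\mu_2,\mu_2-1),
\end{equation*}
which is \eqref{eqthm1}. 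The evaluation $A(\mu_2,\mu_2-1)=(q^{r-\mu_2+1}-1)/(q-1)$ is just \eqref{fg3} with $\mu_1\mapsto\mu_2-1$ and is a routine substitution.

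The step I expect to be the real obstacle is making the contradiction in the previous paragraph genuinely airtight: Lemma \ref{lem1} isolates one $\mu_1$-flat of $supp(\textbf{\textit{w}}^{-})$ inside a $(\mu_2-1)$-flat, but I still need to argue that among the $A(\mu_2,\mu_2-1)$ rows (the $\mu_2$-flats through that $(\mu_2-1)$-flat) at least one \emph{fails} to be compensated by the positive support, and this requires a careful sign/magnitude bookkeeping rather than a pure incidence count — essentially an averaging or extremal-coefficient argument showing that the positive part cannot simultaneously patch all $A(\mu_2,\mu_2-1)$ of these nearly-isolating rows unless $|supp(\textbf{\textit{w}}^{-})|$ was already that large. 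Getting the quantifiers in Lemma \ref{lem1} lined up with ``for every $j$'' versus ``for some $j$'' is where the proof must be written with care.
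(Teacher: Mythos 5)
Your overall strategy --- show $|supp(\textbf{\textit{w}}^{-})|\ge A(\mu_2,\mu_2-1)$ and double it using $|supp(\textbf{\textit{w}}^{+})|\ge|supp(\textbf{\textit{w}}^{-})|$ --- is the right target, and it matches what the paper proves. But there is a genuine gap, and you have correctly sensed where it is: your plan isolates a flat $\mathcal{F}_{j}$ with $j$ in the \emph{negative} support against the rest of the \emph{negative} support, and then hopes that some row through the resulting $(\mu_2-1)$-flat ``cannot be balanced'' by the positive support. No sign/magnitude or averaging argument will rescue this: a row with a single $1$ on $supp(\textbf{\textit{w}}^{-})$ at position $j$ and arbitrary $1$'s on $supp(\textbf{\textit{w}}^{+})$ satisfies $-w_j=\sum_{i\in supp(\textbf{\textit{w}}^{+})}w_ih_i$, which is perfectly consistent, so no contradiction is available from that side. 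The paper's key move is a \emph{mixed} isolation: take $j\in supp(\textbf{\textit{w}}^{+})$ and apply Lemma \ref{lem1} to the collection $\{\mathcal{F}_j\}\cup\{\mathcal{F}_i:i\in supp(\textbf{\textit{w}}^{-})\}$ (of size $1+|supp(\textbf{\textit{w}}^{-})|\le u$ under the contradiction hypothesis $spark(H)<2u$), isolating the \emph{positive} flat from the \emph{negative} ones. A row that contains $\mathcal{F}_j$ and misses all of $supp(\textbf{\textit{w}}^{-})$ then yields $0=\sum_iw_ih_i\ge w_j>0$, because positive entries can only help --- that is where the signs do the work, and it is not the ``symmetric argument on $supp(\textbf{\textit{w}}^{+})$'' you allude to.

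There is a second, smaller error: you assert that \emph{every} $\mu_2$-flat through the isolating $(\mu_2-1)$-flat $\mathcal{F}$ gives a row with no other $1$ on the support. That is false --- Lemma \ref{lem1} only guarantees $\mathcal{F}_i\not\subseteq\mathcal{F}$, and a $\mu_2$-flat containing $\mathcal{F}$ is strictly larger than $\mathcal{F}$ and may well contain some $\mathcal{F}_i$. What is true (and what the paper uses) is that any two of the $u$ $\mu_2$-flats through $\mathcal{F}$ meet only in $\mathcal{F}$, so each $\mathcal{F}_i$ lies in at most one of them; a two-way count of $1$'s in the resulting $u\times(1+|supp(\textbf{\textit{w}}^{-})|)$ submatrix then shows at least $u-|supp(\textbf{\textit{w}}^{-})|\ge1$ of these rows is ``clean.'' Both the pigeonhole count and the positive-against-negative isolation are missing from your write-up, and the second one is the essential idea of the proof.
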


\begin{proof}
Let $$u=A(\mu_2,\mu_2-1)$$ and assume the contrary that $${\rm spark}(H)<2u.$$
Select a $\textit{\textbf{w}}=(w_{1}, w_{2}, \ldots, w_{n})\in {\rm Nullsp}_{\mathbb{R}}^{*}(H)$ such that $|{\rm supp}(\textit{\textbf{w}})|={\rm spark}(H)$. By (\ref{split1}) and (\ref{split2}), we split
the non-empty set ${\rm supp}(\textit{\textbf{w}})$ into two parts ${\rm supp}(\textit{\textbf{w}}^{+})$ and ${\rm supp}(\textit{\textbf{w}}^{-})$, and
assume $|{\rm supp}(\textit{\textbf{w}}^{+})|\geq|{\rm supp}(\textit{\textbf{w}}^{-})|$ without loss of generality. Thus by the assumption $$|{\rm supp}(\textit{\textbf{w}}^{-})|<u \quad\mbox{or}\quad |{\rm supp}(\textit{\textbf{w}}^{-})|\le u-1.$$
For fixed $j\in {\rm supp}(\textit{\textbf{w}}^{+})$, by selecting $j$-th column of $H$ and all the columns in ${\rm supp}(\textit{\textbf{w}}^{-})$ of $H$, we get a submatrix $H(j)$. The number of columns in $H(j)$ is $1+|{\rm supp}(\textit{\textbf{w}}^{-})|$ and not greater than $u$. Let $\mathcal{F}_j$ and $\{\mathcal{F}_i, i\in {\rm supp}(\textit{\textbf{w}}^{-})\}$ be the $\mu_1$-flats corresponding to the columns of $H(j)$. By Lemma \ref{lem1}, there exists one $(\mu_2-1)$-flat $\mathcal{F}$ such that $\mathcal{F}_j \subseteq \mathcal{F}$ and $\mathcal{F}_i\not\subseteq \mathcal{F}$ for all $i\in {\rm supp}(\textit{\textbf{w}}^{-})$. There are exactly $u$ $\;\mu_2$-flats containing $\mathcal{F}$. Note that among these $\mu_2$-flats, any two distinct $\mu_2$-flats have no other common points except those
points in $\mathcal{F}$ (see \cite{txla05}). Hence, each of these $u$ $\;\mu_2$-flats contains the $\mu_1$-flat $\mathcal{F}_j$ and for
any $i\in {\rm supp}(\textit{\textbf{w}}^{-})$, there exist at most one of these $u$
$\;\mu_2$-flats containing the $\mu_1$-flat $\mathcal{F}_i$. In other words,
there exist $u$ rows in $H(j)$ such that each of
these rows has component $1$ at position $j$ and for any $i \in {\rm supp}(\textit{\textbf{w}}^{-})$,
there exists at most one row that has component $1$ at position $i$.

Let $H(u,j)$ be the $u \times (1+|{\rm supp}(\textit{\textbf{w}}^{-})|)$ submatrix of $H(j)$ by choosing these rows, where the column corresponds to $j$ is all 1 column. Now let's count the total number of 1's of $H(u,j)$ in two ways.
The column corresponds to $j$ has $u$
$1$'s while each of the other columns has at most
one $1$. Thus from the view of columns, the total number of $1$'s in
$H(u,j)$ is at most $u+|{\rm supp}(\textit{\textbf{w}}^{-})|$. On the other hand, suppose $x$ is the number of rows in $H(u,j)$ with weight one. Then,
there are $u-x$ rows with weight at least two. Thus
from the view of rows, the total number of $1$'s in $H(u,j)$ is at
least $x+2(u-x)$. Hence, $x+2(u-x)\le u+|{\rm supp}(\textit{\textbf{w}}^{-})|$, which implies that $x\ge
u-|{\rm supp}(\textit{\textbf{w}}^{-})| \ge 1$ by the assumption. In other words, $H(j)$ contains a row with value $1$ at the position corresponding to $j$ and $0$ at other positions.
Denote this row by $\textit{\textbf{h}}(j)$ and let $\textit{\textbf{h}}=(h_1,\ldots,h_n)$ be its corresponding row in $H$. Note that $h_j=1$ and $h_i=0, i \in {\rm supp}(\textit{\textbf{w}}^{-})$. Since $\textit{\textbf{w}}\in {\rm Nullsp}_{\mathbb{R}}^{*}(H)$,
\begin{eqnarray*}
0&=&\sum_{i\in {\rm supp}(\textit{\textbf{w}})}w_i h_i=\sum_{i\in supp(\textit{\textbf{w}}^+)}w_i h_i+\sum_{i\in supp(\textit{\textbf{w}}^-)}w_i h_i\\
&=& \sum_{i\in supp(\textit{\textbf{w}}^+)}w_i h_i \ge w_j >0,
\end{eqnarray*}
which leads to a contradiction. Therefore, the assumption is wrong and the theorem follows by (\ref{eq:fg3}).
\end{proof}
\begin{Remark}
Combining Theorem \ref{thm1} with (\ref{proeq1}), we have that when the type-I finite geometry measurement matrix $H$ is used, any sparse signal with sparsity{
\begin{equation}\label{kl0new1}
k_{\ell_0}<A(\mu_2,\mu_2-1)
\end{equation}}
can be exactly recovered by $\ell_0$-minimization.
\end{Remark}
\begin{Remark}
\label{remark2}
 By Theorem \ref{preth},
\begin{eqnarray}
{\rm spark}(H)\ge \frac{2A(\mu_2,\mu_1)}{A(\mu_2,\mu_1+1)}\label{bound11}.
\end{eqnarray}
It is easy to verify by (\ref{eq:fg3}) that
the lower bound (\ref{eqthm1}) is tighter than (\ref{bound11}) when $\mu_2>\mu_1+1$.
\end{Remark}

\begin{Remark}\label{rem:type1}
  Suppose $q$ is large, according to (\ref{eq:fg1})--(\ref{eq:fg3}),
  \begin{eqnarray}
    N_{EG}(\mu_2,\mu_1) &\approx& N_{PG}(\mu_2,\mu_1)\approx q^{(\mu_2-\mu_1)\cdot(\mu_1+1)},\nonumber\\
    A_{EG}(\mu_2,\mu_1) &=& A_{PG}(\mu_2,\mu_1)\approx q^{(r-\mu_2)\cdot(\mu_2-\mu_1)}.\nonumber
  \end{eqnarray}
  As a result, $H$ has $m=N(r,\mu_2)\approx q^{(r-\mu_2)\cdot(\mu_2+1)}$ rows, $n=N(r,\mu_1)\approx q^{(r-\mu_1)\cdot(\mu_1+1)}$ columns and ${\rm spark}(H)\ge 2A(\mu_2,\mu_2-1)\approx 2q^{(r-\mu_2)}\approx 2\cdot\sqrt[\mu_2+1]{m}$, which means that any $k$-sparse signal can be exactly recovered by $\ell_0$-minimization with $k=O(\sqrt[\mu_2+1]{m})\leq O(\sqrt{m})$.
  Such matrix can not be proved to overcome the square-root bottleneck in the sense of $\ell_0$-minimization.
\end{Remark}

Similarly, for the type-II finite geometry measurement matrix, we have the next result by \cite[Lemma 2]{sxff}.

\begin{Theorem}
\label {thm2}
Let $r, \mu_1,\mu_2$ be integers, $0\le
\mu_1<\mu_2<r$ and $H^T$ be the type-II finite geometry measurement matrix. Then
\begin{eqnarray}
{\rm spark}(H^T)\ge 2N(\mu_1+1,\mu_1),
\label{eqthm2}
\end{eqnarray}
where for Euclidean geometry (EG) and projective geometry (PG) respectively
\begin{eqnarray*}
N_{EG}(\mu_1\!+\!1,\mu_1)\!=\frac{q^{\mu_1+2}\!\!-\!q}{q-1},\;
N_{PG}(\mu_1\!+\!1,\mu_1)\!=\frac{q^{\mu_1+2}\!\!-\!1}{q-1}.
\end{eqnarray*}
\end{Theorem}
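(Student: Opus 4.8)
The plan is to dualize the proof of Theorem~\ref{thm1}, reversing the containment relation between $\mu_1$-flats and $\mu_2$-flats throughout and invoking Lemma~\ref{lem2} in place of Lemma~\ref{lem1}. Put $u=N(\mu_1+1,\mu_1)$ and suppose, for contradiction, that $spark(H^T)<2u$. Pick $\textit{\textbf{w}}\in Nullsp_{\mathbb{R}}^{*}(H^T)$ with $||\textit{\textbf{w}}||_{0}=spark(H^T)$, split $supp(\textit{\textbf{w}})$ into $supp(\textit{\textbf{w}}^{+})$ and $supp(\textit{\textbf{w}}^{-})$ as in (\ref{split1})--(\ref{split2}), and assume without loss of generality that $|supp(\textit{\textbf{w}}^{+})|\ge|supp(\textit{\textbf{w}}^{-})|$, so that $|supp(\textit{\textbf{w}}^{-})|\le u-1$. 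The columns of $H^T$ are indexed by the $\mu_2$-flats of $FG(r,q)$ and its rows by the $\mu_1$-flats. Fix $j\in supp(\textit{\textbf{w}}^{+})$ and let $\mathcal{F}_j$ and $\{\mathcal{F}_i:i\in supp(\textit{\textbf{w}}^{-})\}$ be the $\mu_2$-flats indexing the columns $\{j\}\cup supp(\textit{\textbf{w}}^{-})$; these are at most $u$ in number, so Lemma~\ref{lem2} provides a $(\mu_1+1)$-flat $\mathcal{F}$ with $\mathcal{F}\subseteq\mathcal{F}_j$ and $\mathcal{F}\not\subseteq\mathcal{F}_i$ for all $i\in supp(\textit{\textbf{w}}^{-})$.

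The key geometric observation, playing here the role that ``two distinct $\mu_2$-flats through a common $(\mu_2-1)$-flat meet only in that flat'' played in Theorem~\ref{thm1}, is that a $\mu_2$-flat which does not contain $\mathcal{F}$ can contain at most one of the $N(\mu_1+1,\mu_1)=u$ $\mu_1$-flats that lie inside $\mathcal{F}$. Indeed, two distinct $\mu_1$-flats contained in the $(\mu_1+1)$-flat $\mathcal{F}$ span $\mathcal{F}$ (two distinct codimension-one subflats of a flat generate it, in both $EG(r,q)$ and $PG(r,q)$), so a $\mu_2$-flat containing two of them would contain $\mathcal{F}$. Hence, among the $u$ rows of $H^T$ indexed by the $\mu_1$-flats contained in $\mathcal{F}$, every one has a $1$ in column $j$ (each such $\mu_1$-flat lies in $\mathcal{F}\subseteq\mathcal{F}_j$), while for every $i\in supp(\textit{\textbf{w}}^{-})$ at most one of these rows has a $1$ in column $i$.

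With this in hand the double-counting of Theorem~\ref{thm1} carries over essentially verbatim. Restrict $H^T$ to those $u$ rows and the $1+|supp(\textit{\textbf{w}}^{-})|$ columns in $\{j\}\cup supp(\textit{\textbf{w}}^{-})$; counting the $1$'s by columns gives at most $u+|supp(\textit{\textbf{w}}^{-})|$, and counting by rows gives at least $x+2(u-x)$, where $x$ is the number of weight-one rows (each of which must have its single $1$ in column $j$). Therefore $x\ge u-|supp(\textit{\textbf{w}}^{-})|\ge 1$, so some row $\textit{\textbf{h}}=(h_1,\ldots)$ of $H^T$ satisfies $h_j=1$ and $h_i=0$ for all $i\in supp(\textit{\textbf{w}}^{-})$, whence $0=\sum_{i\in supp(\textit{\textbf{w}})}w_ih_i=\sum_{i\in supp(\textit{\textbf{w}}^{+})}w_ih_i\ge w_j>0$, a contradiction. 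This gives $spark(H^T)\ge 2N(\mu_1+1,\mu_1)$, and the closed forms for $N_{EG}(\mu_1+1,\mu_1)$ and $N_{PG}(\mu_1+1,\mu_1)$ drop out of (\ref{fg1}) and (\ref{fg2}) by telescoping the products after cancelling adjacent factors.

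I expect the only genuine obstacle to be pinning down the key geometric observation cleanly and uniformly over Euclidean and projective geometry --- that is, checking that two distinct $\mu_1$-flats inside a $(\mu_1+1)$-flat span it (the affine case splitting into the ``parallel'' and ``intersecting'' subcases, the projective case reducing to the subspace-dimension identity), much as the corresponding fact for Theorem~\ref{thm1} was drawn from \cite{txla05}. The rest is a faithful transcription of the proof of Theorem~\ref{thm1} with all flat-containments reversed.
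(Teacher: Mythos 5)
Your proposal is correct and follows essentially the same route as the paper's own proof: apply Lemma~\ref{lem2} to obtain the $(\mu_1+1)$-flat $\mathcal{F}$, observe that a $\mu_2$-flat not containing $\mathcal{F}$ can contain at most one of the $u=N(\mu_1+1,\mu_1)$ $\mu_1$-flats inside $\mathcal{F}$ (the paper phrases this as $\mathcal{F}$ being the unique $(\mu_1+1)$-flat through two distinct such $\mu_1$-flats), and then repeat the double-counting contradiction from Theorem~\ref{thm1}. Your spelled-out justification of the spanning fact is a harmless elaboration of the step the paper states in one line.
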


Finally, we summarize the parameters and available performance guarantees under both $\ell_0$-minimization and $\ell_1$-minimization of the binary measurement matrices proposed in this paper in Table \ref{tab:parameter}.
\begin{table*}[htbp]
\centering
\begin{threeparttable}[b]
\renewcommand{\arraystretch}{1.2}
   \caption{The Parameters and Theoretical Performance of Binary Measurement Matrices from Finite Geometry}
   \label{tab:parameter}
   \begin{tabular}{|c|c|c|c|c|c|c|c|}
     \hline matrix&$m$&$n$&spark&$\mu$&$k_{\ell_0}$&$k_{\ell_1}$&parameter constraints\\
     \hline\hline \tabincell{c}{class--I,\\type-I}&$N(r, \mu_2)$&$N(r, \mu_1)$&$\geq2A(\mu_2,\mu_2-1)$&$\frac{A(\mu_2,\mu_1+1)}{A(\mu_2,\mu_1)}$&$<A(\mu_2,\mu_2-1)$&$<\frac{A(\mu_2,\mu_1)}{A(\mu_2,\mu_1+1)}$& \tabincell{c}{$0\le \mu_1<\mu_2\le r$,\\$m<n$}\\
     \hline \tabincell{c}{class--I,\\type-II}&$N(r, \mu_1)$&$N(r, \mu_2)$&$\geq2N(\mu_1+1,\mu_1)$&$\frac{N(\mu_2-1,\mu_1)}{N(\mu_2,\mu_1)}$&$<N(\mu_1+1,\mu_1)$&$<\frac{N(\mu_2,\mu_1)}{N(\mu_2-1,\mu_1)}$&
     \tabincell{c}{$0\le \mu_1<\mu_2\le r$, \\$m<n$}\\
%     \hline \tabincell{c}{class--II\\$EG(2,q)$}&$\gamma q$&$\rho q$&$\geq2\gamma$&$\frac{1}{\gamma}$&$<\gamma$&$<\gamma$&\tabincell{c}{$1<\gamma\leq\rho\leq q$,\\$q$ is a prime power}\\
     \hline \tabincell{c}{class--II,\\{$H(\gamma,\rho,q)$}}&{$\gamma q$}&{$\rho q$}&$\geq2\gamma$&{$\frac{1}{\gamma}$}&$<\gamma$&{$<\gamma$}&{\tabincell{c}{$1<\gamma\leq\rho\leq q$,\\$q$ is a prime power}}\\
     \hline
   \end{tabular}
%   \begin{tablenotes}
%   \footnotesize
%   \item[1] The signal sparsity $k_{\ell_1}$ which ensures the exact recovery of any $k_{\ell_1}$-sparse signal under $\ell_1$-minimization is deduced by (\ref{kl1new}) {since all of the proposed matrices are regular.}
%   \end{tablenotes}
\end{threeparttable}
\end{table*}

\section{Simulations}
\label{simulation}
In this section, we show the empirical performance of the two classes of finite geometry measurement matrices proposed in Section \ref{fgldpc} by several examples.
%The class--I incidence matrices in finite geometry are used {directly} as measurement matrices and the parallel structure of Euclidean geometry is exploited to construct the class--II binary measurement matrices with flexible sizes.
The proposed matrices have relatively large spark and low coherence, thus their performance can be theoretically guaranteed to some extent under $\ell_0$-minimization and $\ell_1$-minimization, respectively, see Table \ref{tab:parameter}.
Simulation results below show that these matrices perform comparably to, sometimes even better than, the corresponding Gaussian random matrices under both OMP\footnote{Matlab codes can be found at {http://sparselab.stanford.edu/}.} and BP\footnote{Matlab codes can be found at {http://www.cs.ubc.ca/$\sim$mpf/asp/}.}.

If not clearly stated, all simulations are conducted under the following conditions.
The $k$-sparse signals are obtained by firstly selecting the support uniformly at random and then generating nonzero values \emph{i.i.d.} from the standard normal distribution $\mathcal{N}(0, 1)$ (Fig. \ref{fig:pg3401}--\ref{fig:320}, \ref{example:arbmat}) or the Rademacher distribution (Fig. \ref{fig:pt}--\ref{fig:dgfgcompare}, \ref{fig:mixnoise}).
The entries of the Gaussian matrices are chosen \emph{i.i.d.} from $\mathcal{N}(0, 1)$ and then normalized to give fair measurements to all components of the signals.
For each measurement matrix $A$ or $H$ and each signal $\textit{\textbf{x}}$ with sparsity $k$, we conduct an experiment using $M$ Monte Carlo trials ($M=5000$ in Fig. \ref{fig:pg3401}--\ref{fig:320}, \ref{example:arbmat}).
In the $i$-th Monte Carlo trial, a relative recovery error $e_i = ||\textit{\textbf{x}}^*-\textit{\textbf{x}}||_2/||\textit{\textbf{x}}||_2$ is computed, where $\textit{\textbf{x}}^*$ denotes the recovered signal.
If $e_i\leq 0.001$, we declare this recovery to be perfect.
Finally, an average percentage of perfect recovery over the $M$ trials is obtained and shown as a point in the corresponding figures (Fig. \ref{fig:pg3401}--\ref{fig:320}, \ref{fig:dgfgcompare}--\ref{example:arbmat}).

\subsection{Empirical Performance of the class--I Finite Geometry Measurement Matrices}
In this subsection, the class--I finite geometry measurement matrices constructed in Section \ref{subsec:classImat} are used in CS.
\begin{Example}
\label{example:pg3401}
\begin{figure}
\centering
\includegraphics[width=0.4\textwidth]{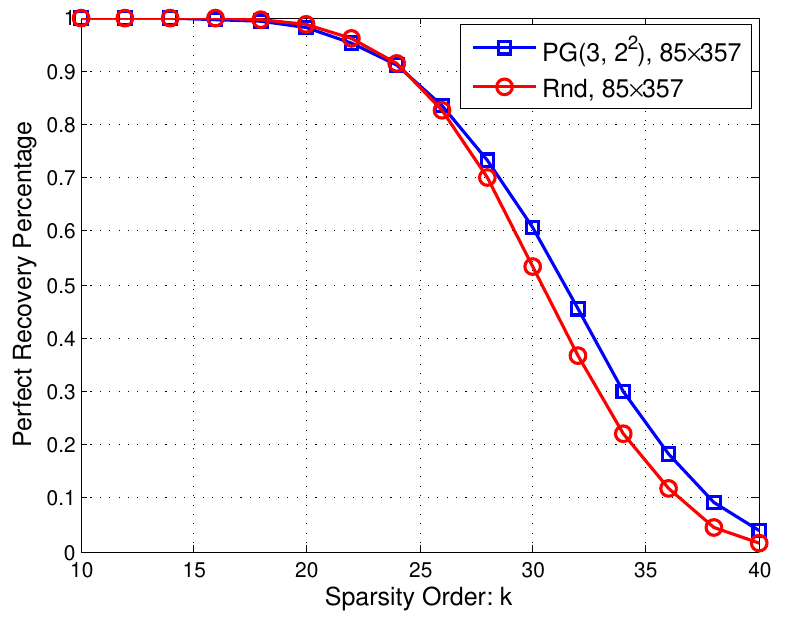}
    \caption {Empirical performance of a type-II projective geometry measurement matrix in $PG(3, 2^2)$ with $\mu_1=0,\;\mu_2=1$ and the corresponding Gaussian matrix under OMP.}\label{fig:pg3401}
\end{figure}
Let $r=3, q=2^2, \mu_{2}=1$, $\mu_{1}=0$,
$PG(3, 2^2)$ consists of $J=357$ lines and $n=85$ points.
Let $H$ be the $J\times n$ line-point incidence matrix in $PG(3, 2^2)$.
By transposing $H$, we can obtain a $(\gamma,\rho)$-regular type-II projective geometry measurement matrix $H^T$, where $\gamma=N_{PG}(1,0)=5$ and $\rho=A_{PG}(1,0)=21$.
Moreover, $H^T$ has {coherence $\mu(H^T) = 1/5$} and ${\rm spark}(H^T)\geq10$.
It is observed from Fig. \ref{fig:pg3401} that $H^T$ has {empirically} similar performance to Gaussian matrix {under OMP}.
For $k<10$, exact recovery is obtained and the corresponding points are not plotted for clear comparisons.
\end{Example}

\begin{figure}
\centering
\includegraphics[width=0.4\textwidth]{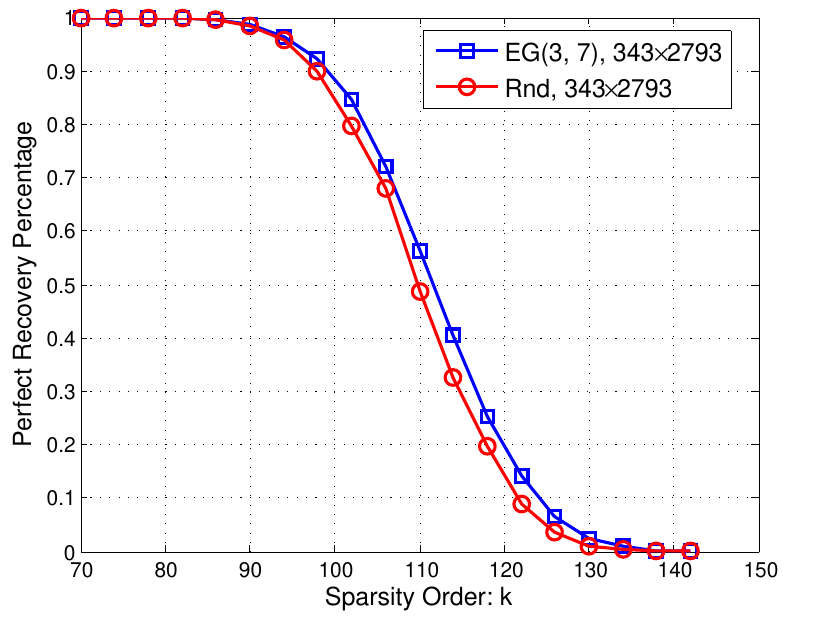}
  \caption {Empirical OMP performance of a type-II Euclidean geometry measurement matrix in $EG(3,7)$ with $\mu_1=0,\;\mu_2=1$ and the corresponding Gaussian matrix.}\label{fig:eg3701}
\end{figure}
\begin{Example}\label{example:eg3701}
Let $r=3$, $q=7$, $\mu_2=1$, and $\mu_1=0$.
Let $H$ be the $2793\times343$ line-point incidence matrix in $EG(3, 7)$. $H^T$ is a $(7,57)$-regular type-II Euclidean geometry measurement matrix with {$\mu(H^T) = 1/7$} and ${\rm spark}(H^T)\geq 14$. Fig. \ref{fig:eg3701} shows the {empirically} good performance of $H^T$ {under OMP}.
\end{Example}

\begin{figure}
\centering
\includegraphics[width=0.4\textwidth]{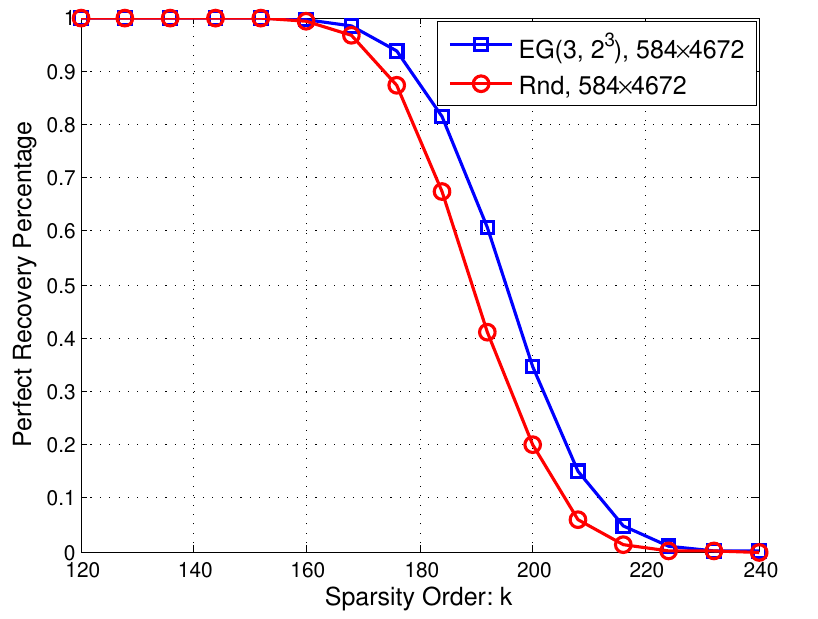}
  \caption {Empirical performance of a type-I Euclidean geometry measurement matrix in $EG(3, 2^3)$ with $\mu_1=1,\;\mu_2=2$ and the corresponding Gaussian matrix under OMP.}\label{fig:eg3812}
\end{figure}
\begin{Example}\label{example:eg3812}
Let $r=3$, $q=2^3$, $\mu_2=2$, $\mu_1=1$ and $H$ be the $(9,72)$-regular type-I incidence matrix in $EG(3, 2^3)$ with size $584\times4672$, {$\mu(H) = 1/9$}, ${\rm spark}(H)\geq 18$.
Fig. \ref{fig:eg3812} shows that
some matrices from finite geometry have {empirically} very good performance {under OMP} for the moderate length signals.
\end{Example}

\subsection{Empirical Performance of the class--II Finite Geometry Measurement Matrices}
In this subsection, the class--II finite geometry measurement matrices constructed in Section \ref{sec:parallel} are used in CS.
\begin{Example}
\label{example:1024}
Let $r=2, q=2^{5}, \mu_{2}=1$, $\mu_{1}=0$, and $H=(H_1^T, H_2^T, \ldots, H_{33}^T)^{T}$ be the $1056\times 1024$ line-point incidence matrix of $EG(2, 2^{5})$.
\begin{figure}
\centering
\includegraphics[width=0.4\textwidth]{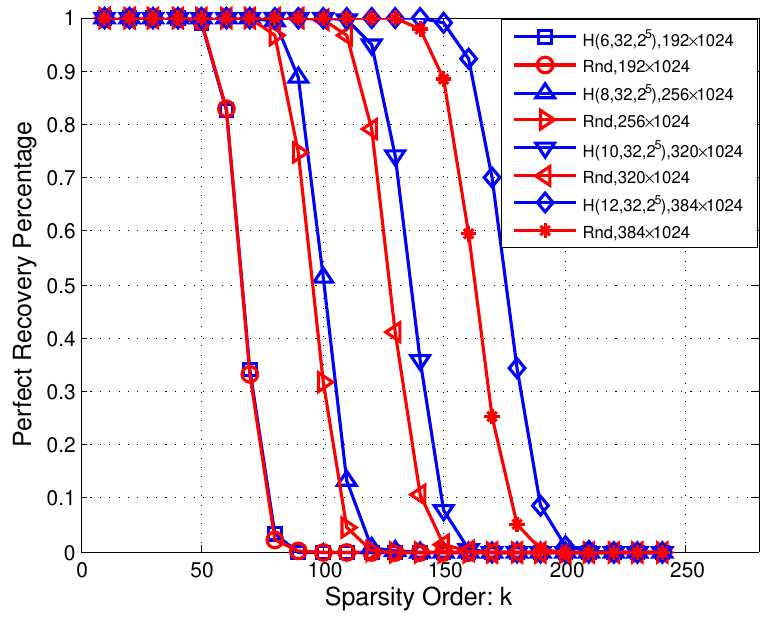}
  \caption {Empirical performance of $H(6,32,2^5)$, $H(8,32,2^5)$, $H(10,32,2^5)$, $H(12,32,2^5)$ and their corresponding Gaussian matrices under OMP.}
  \label{fig:1024}
\end{figure}
Fig. \ref{fig:1024} shows {the OMP recovery} performance of $H(6,32,2^5)$, $H(8,32,2^5)$, $H(10,32,2^5)$, and $H(12,32,2^5)$ with sizes $192\times1024$, $256\times1024$, $320\times1024$ and $384\times1024$, respectively.
{For each $\gamma\in\{6, 8, 10, 12\}$, $\mu(H(\gamma,q,q))= 1/\gamma$, $spark(H(\gamma,q,q))\geq 2\gamma$.}
It is easily observed that all of the submatrices perform better than their corresponding Gaussian matrices, and the more parallel bundles are chosen, the better the submatrix performs, and its gain over Gaussian matrix becomes larger.
\end{Example}
\begin{Example}
\label{example:320}
Consider $H(10,32,2^5)=(H_1^T, \ldots, H_{10}^T)^{T}$ in Example \ref{example:1024}.
By deleting the $(1024-32\rho)$ columns of $H(10,32,2^5)$ corresponding to the points on the first $(32-\rho)$ lines of the 11-th parallel line bundle in $EG(2,2^5)$, we obtain a $(10, \rho)$-regular submatrix $H(10,\rho,2^5)$.
\begin{figure}
\centering
\includegraphics[width=0.4\textwidth]{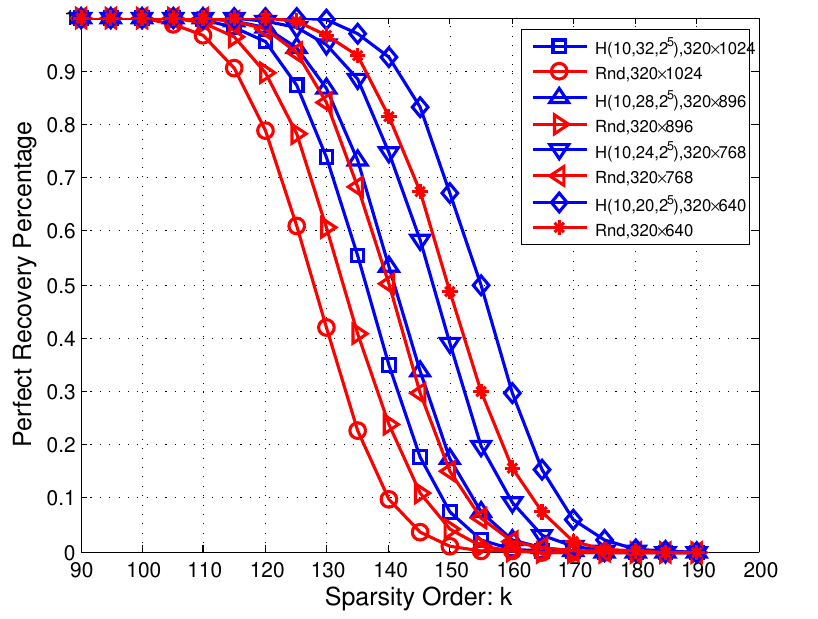}
  \caption {Empirical performance of $H(10,32,2^5)$, $H(10,28,2^5)$, $H(10,24,2^5)$, $H(10,20,2^5)$ and their corresponding Gaussian matrices under OMP.}
  \label{fig:320}
\end{figure}
The 4 blue lines from left to right in Fig. \ref{fig:320} show the performance of $H(10,32,2^5)$, $H(10,28,2^5)$, $H(10,24,2^5)$, and $H(10,20,2^5)$ respectively.
{For any $\rho\in\{20, 24, 28, 32\}$, $\mu(H(10,\rho,2^5))= 1/10$, $spark(H(10,\rho,2^5))\geq 20$.}
Obviously, all of the submatrices perform better than their corresponding Gaussian matrices (the 4 red lines from left to right) {under OMP}.
%but its gain becomes slightly smaller when more columns are deleted.
\end{Example}

In the following, utilizing the \emph{phase transition} curve, we verify that the proposed binary matrices perform empirically as good as Gaussian matrices under BP recovery.
The phase transition phenomenon discovered by Donoho and Tanner \cite{ddtj} says that for any signal $\textit{\textbf{x}}$ with length $n$, sparsity $k=\varepsilon n$ and measured by a Gaussian matrix $A$ with $m=\delta n$ rows, there exists a function $\varepsilon^*(\delta)$ satisfying that for large $n$, if $\varepsilon<\varepsilon^*(\delta)$, BP will recover $\textit{\textbf{x}}$ exactly and fail otherwise with overwhelming probability.
Moreover, \cite{hmsj} show that the phase transition region of many deterministic matrices, such as the DG Frames \cite{rcsh,sjthesis} and Chirp sensing matrices \cite{lasd}, also coincide with those of Gaussian matrices.

%This phase transition coincidence easily implies that for a $k$-sparse signal measured by an $m\times n$ Gaussian matrix, if it is perfectly recovered by basis pursuit (BP), BP will also produce perfect reconstruction when it is measured by many deterministic matrices with roughly the same size, and vice versa.

\begin{Example}
Let $m$ be 256 or 1024 and in each case, let $n$ vary such that $m/n\in\{1/16, 1/15, \ldots, 1/3, 1/2\}$. Construct 30 binary and regular measurement matrices using the method described in Section \ref{sec:parallel}. In particular, when $m=256$, the matrices come from the line-point incidence matrices of $EG(2,2^5)$ if $m/n\in\{1/2,1/3,1/4\}$ and from $EG(2,2^6)$ otherwise.
When $m=1024$, the matrices are submatrices of the line-point incidence matrices of $EG(2,2^6)$ if $m/n\in\{1/2,1/3,1/4\}$ and from $EG(2,2^7)$ otherwise.
Similar to Fig. S3 in \cite{hmsj}, we can draw the corresponding phase transition curve of the proposed matrices, see Fig. \ref{fig:pt}.
From Fig. \ref{fig:pt}, we can see that the propose matrices in Section \ref{sec:parallel} are empirically as good as Gaussian matrices and other deterministic matrices under BP recovery.

\begin{figure}
\centering
\includegraphics[width=0.4\textwidth]{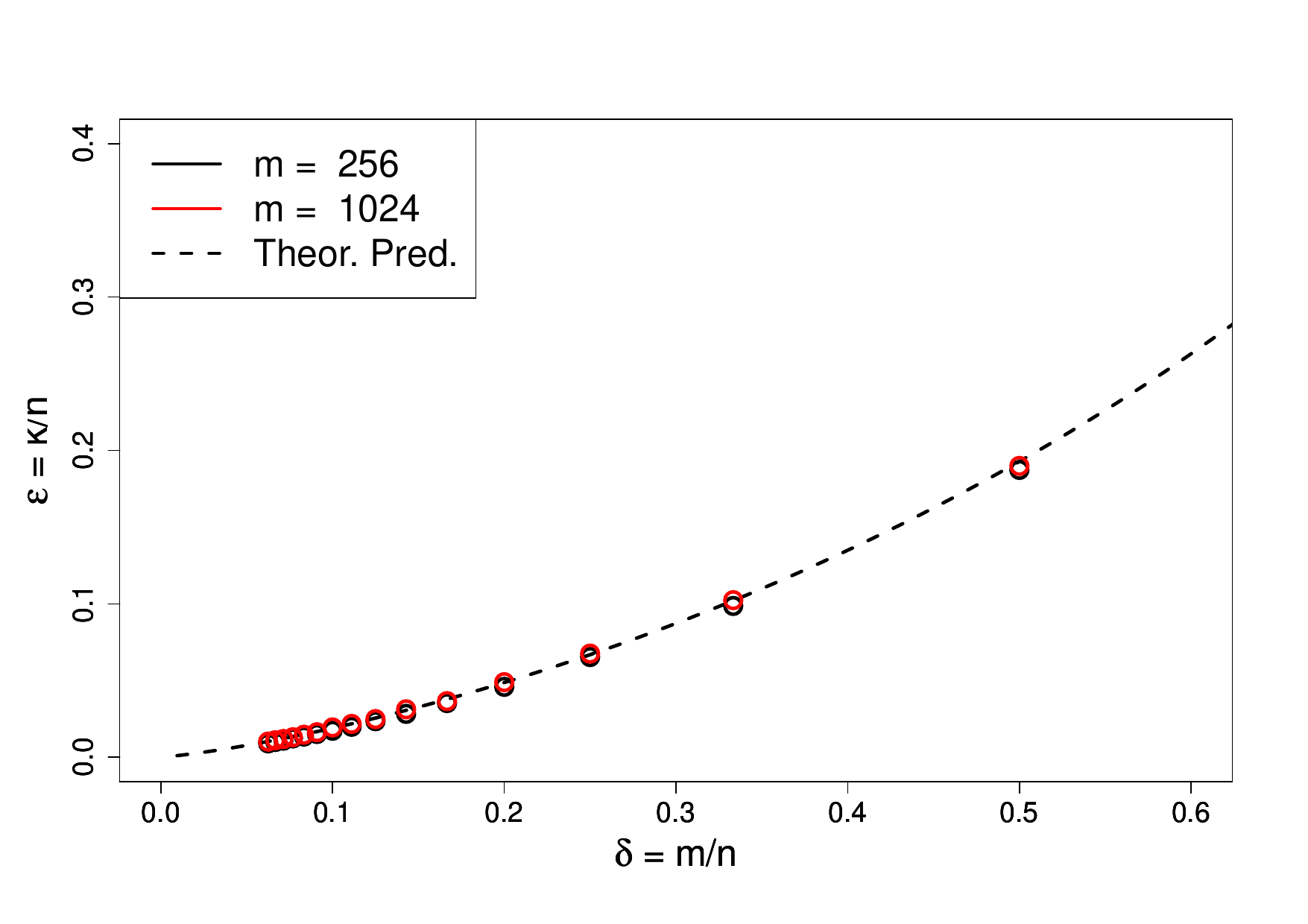}
  \caption {Empirical phase transition curve under BP for the class--II finite geometry measurement matrices. For each $\delta$, the location $\varepsilon$ of 50\% probability of perfect recovery is denoted by (black and red) circles and the asymptotic Gaussian phase transition is shown by the dashed curve.}\label{fig:pt}
\end{figure}
\end{Example}

{
Similar to many other deterministic constructions, the sizes of the proposed matrices are restricted to special numbers.
Sometimes in practice, we often need to construct a deterministic $m\times n$ measurement matrix, where $m$ and $n$ may not happen to be any pair of these special numbers.
Generally, we can obtain the desired matrix by constructing a larger matrix through proper deterministic construction and then removing the extra rows and columns directly.
It is easy to see that the usual theoretical guarantees over a measurement matrix, such as coherence, often keep valid (better or unchanged) if a few columns of the matrix are removed.
As a result, the submatrix obtained by removing extra columns of a deterministic measurement matrix with a special size is expected to perform as well in practice.
However, the influence of removal of rows on the theoretical guarantees seems generally hard to predict, thus resulting in the unpredictability of their empirical performance.}

\begin{Example}\label{example:dgfgcompare}
Construct a real-valued $1024\times 3072$ measurement matrix $A_{\rm DG}$ based on the Kerdock frame $DG(9,0)$ \cite{rcsh,sjthesis}.
Now suppose that we want to construct a $512\times3072$ measurement matrix.
{Let $A_{\rm DG}^{\rm det}$ be a deterministic submatrix of $A_{\rm DG}$ by removing the last $512$ rows of $A_{\rm DG}$ and $A_{\rm DG}^{\rm rnd}$ be a random submatrix of $A_{\rm DG}$ by randomly deleting 512 rows of $A_{\rm DG}$.
See Fig. \ref{fig:dgfgcompare} for the empirical BP recovery performances of $A_{\rm DG}$, $A_{\rm DG}^{\rm det}$, $A_{\rm DG}^{\rm rnd}$ and the corresponding class--II finite geometry matrix $H(16,48,2^6)$, $H(8,48,2^6)$.
\begin{figure}
    \centering
    \includegraphics[width=0.4\textwidth]{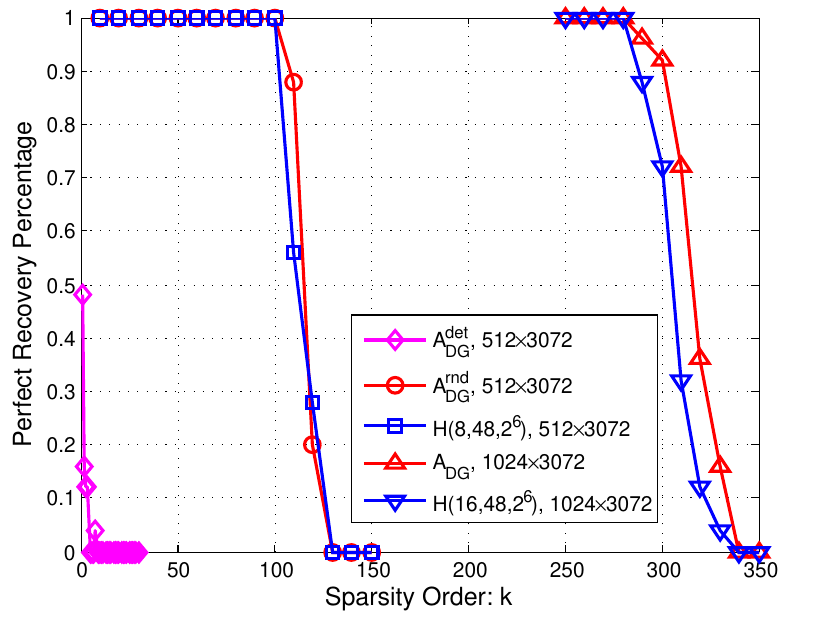}
  \caption {Empirical BP recovery performance of the measurement matrices with size $512\times3072$ (the left 3 curves) and $1024\times3072$ (the right 2 curves) from $DG(9,0)$ and $EG(2,2^6)$. }\label{fig:dgfgcompare}
\end{figure}
In Fig. \ref{fig:dgfgcompare}, $A_{\rm DG}$ and $H(16,48,2^6)$ have similar performance, $A_{\rm DG}^{\rm rnd}$ performs comparably to $H(8,48,2^6)$, but $A_{\rm DG}^{\rm det}$ shows rather bad performance.
Numerical computations show that $\mu(A_{\rm DG})=1/32$, $\mu(H(16,48,2^6)) = 1/16$, $\mu(A_{\rm DG}^{\rm det})=1$, ${\rm spark}(A_{\rm DG}^{\rm det})=2$, $\mu(A_{\rm DG}^{\rm rnd})=0.1289$, $\mu(H(8,48,2^6))=1/8$ while ${\rm spark}(H(16,48,2^6)) \ge 32$, ${\rm spark}(H(8,48,2^6))\geq 16$ according to Theorem \ref{preth}.
Additionally, we generate $A_{\rm DG}^{\rm rnd}$ for several times and calculate their coherence by Matlab, see Table \ref{tab:coh} for five of the results.
\begin{table}[htbp]
%\centering
\renewcommand{\arraystretch}{1.2}
   \caption{The Coherence $\mu$ of Submatrices Obtained by Deterministically or Randomly Choosing 512 Rows from $A_{\rm DG}$.}
   \label{tab:coh}
   \begin{tabular}{|c|c|c|c|c|c|c|}
     \hline matrix&$A_{\rm DG}^{\rm det}$&$A_{\rm DG}^{\rm rnd}$-1&$A_{\rm DG}^{\rm rnd}$-2&$A_{\rm DG}^{\rm rnd}$-3&$A_{\rm DG}^{\rm rnd}$-4&$A_{\rm DG}^{\rm rnd}$-5\\
     \hline $\mu$&1&0.1406&0.1680&0.1953&0.1328&0.1211\\
     \hline
   \end{tabular}
\end{table}
From Fig. \ref{fig:dgfgcompare} and Table \ref{tab:coh}, we can see that the coherence and empirical performance of the submatrices obtained by removing some rows from $A_{\rm DG}$ seem to be unstable in practice.}
\end{Example}

In the following example, we will show that the submatrices obtained by removing some extra rows and columns from the class--II finite geometry matrices often have relatively stable empirical performance.
Afterwards, an intuitive explanation for this phenomenon by the coherence and spark will be given.

\begin{Example}\label{example:arbmat}
Suppose $m_1=270$, $m_2=285$, $m_3=300$, $n_1=820$, $n_2=840$, $n_3=860$.
By removing the last extra rows and columns of $H(9,26,2^5)$, $H(9,27,2^5)$ and $H(10,27,2^5)$, we can construct three binary matrices $H_{\rm sub}(9,26,2^5)$, $H_{\rm sub}(9,27,2^5)$ and $H_{\rm sub}(10,27,2^5)$ with sizes $m_1\times n_1$, $m_2\times n_2$ and $m_3\times n_3$, respectively.
Their performances are shown in Fig. \ref{fig:arbmat}, which indicates that all of them perform empirically better than the corresponding Gaussian matrices under OMP.
\begin{figure}
\centering
\includegraphics[width=0.4\textwidth]{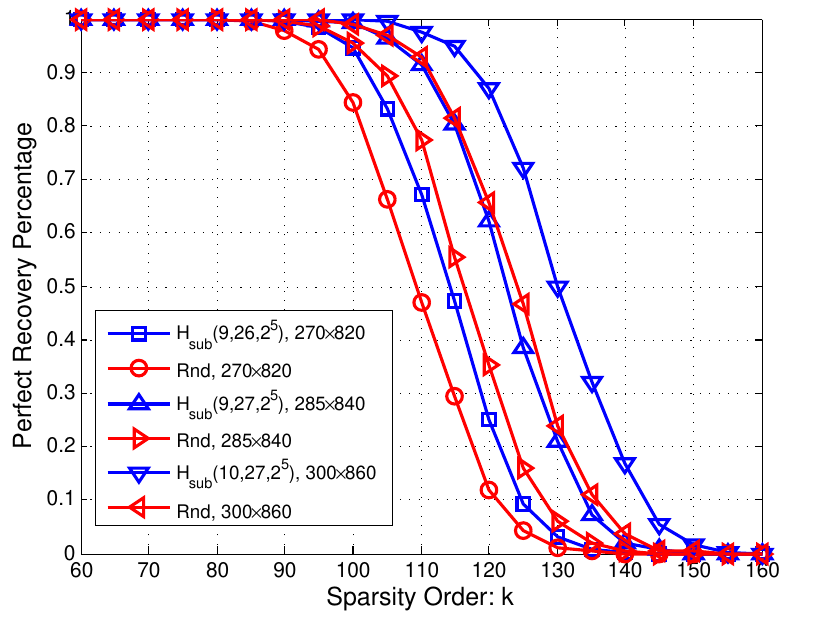}
  \caption {Empirical performance of the submatrices of $H(9,26,2^5)$, $H(9,27,2^5)$ and $H(10,27,2^5)$ with sizes $270\times820$, $285\times 840$ and $300\times 860$ and their corresponding Gaussian matrices under OMP.}
  \label{fig:arbmat}
\end{figure}
%In addition, if we delete the extra rows \emph{randomly} among the last $q=32$ rows of $H(9,26,2^5)$, $H(9,27,2^5)$ or $H(10,27,2^5)$, the resulting matrices have almost the same empirical performance as $H_{\rm sub}(9,26,2^5)$, $H_{\rm sub}(9,27,2^5)$ or $H_{\rm sub}(10,27,2^5)$ obtained \emph{deterministically}, respectively.
Numerical computations by Matlab show that $\mu(H_{\rm sub}(9,26,2^5))=\mu(H_{\rm sub}(9,27,2^5))= 1/8$, $\mu(H_{\rm sub}(10,27,2^5))=1/9$ and according to Theorem \ref{preth}, ${\rm spark}(H_{\rm sub}(9,26,2^5))\geq16$, ${\rm spark}(H_{\rm sub}(9,27,2^5))\geq16$ and ${\rm spark}(H_{\rm sub}(10,27,2^5))\geq 18$.
\end{Example}
{
\begin{Remark}\label{rem:coherenceanalysis}
If we remove any $0\le c_1<q$ rows among the last $q$ rows and any $0\le c_2<q$ columns from $H(\gamma,\rho,q)$, where $1<\gamma\le \rho\le q$, the coherence of the resulting submatrix $H_{\rm sub}(\gamma,\rho,q)$ will satisfy
    $$\frac{1}{\gamma}\leq\mu(H_{\rm sub}(\gamma,\rho,q))\leq\frac{1}{\gamma-1},$$
since $H_{\rm sub}(\gamma,\rho,q)$ has either column weight $\gamma-1$ or $\gamma$ and maximum inner product 1 between any two columns when $\rho\ge2$.
In addition,  according to Theorem \ref{preth}:
$${\rm spark}(H_{\rm sub}(\gamma,\rho,q))\geq 2(\gamma-1).$$
Fig. \ref{fig:cohchange} illustrates how the lower bounds of $1/\mu$ and spark of $H_{\rm sub}(\gamma,\rho,q)$ change when $m$ increases from $32$ to $1024$ for $q=2^5$.
  \begin{figure}
    \centering
    \includegraphics[width=0.4\textwidth]{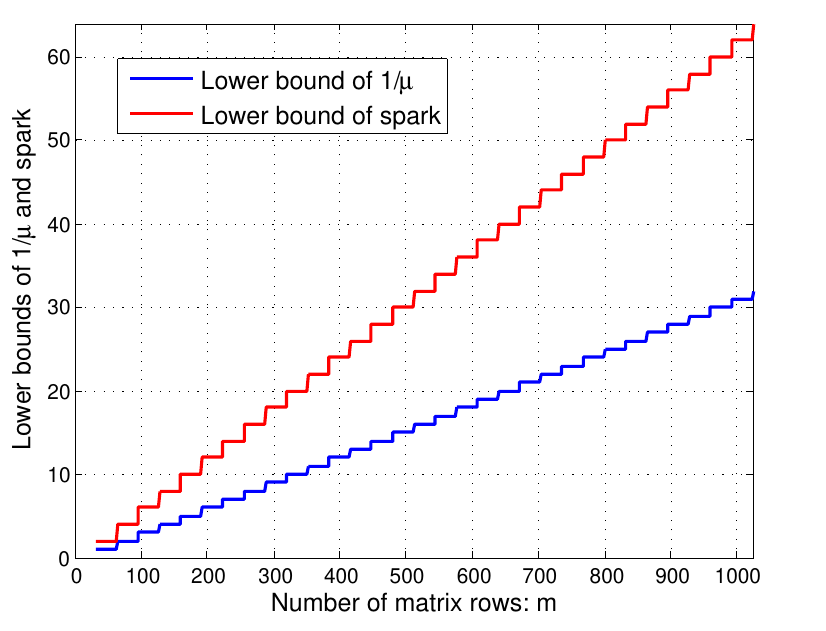}
    \caption {The variation trends of the lower bounds of $1/\mu$ and spark with respect to $m$ for the $m\times n$ measurement matrix $H_{\rm sub}(\gamma,\rho,q)$, where $(\gamma-1)q<m\le\gamma q$, $(\rho-1)q<n\le \rho q$, $1\le\gamma\le \rho\le q$, and $q=2^5$.}
  \label{fig:cohchange}
\end{figure}
In Fig. \ref{fig:cohchange}, the lower bounds of spark and $1/\mu$ grow \emph{gradually}, rather than sharply or unstably, when $m$ increases, thus preventing the performance of $H(\gamma,\rho,q)$ from deteriorating too fast when some rows are removed, which can explain to some extent the empirically good performance of $H_{\rm sub}(\gamma,\rho,q)$.
\end{Remark}
}

\subsection{Stable and Robust Compressed Sensing Under Measurement Matrices from Finite Geometry}
In practice, the signals are often approximately sparse instead of exactly sparse (namely, they have data-domain noises) and the measurements can also be corrupted by measurement-domain noises.
In the following, we will explore the {empirical} performance of the proposed matrices under noises.

{Consider the practical compressed sensing problem:
\begin{equation}
  \textit{\textbf{y}}=A(\textit{\textbf{x}}+\textit{\textbf{e}}_d)+\textit{\textbf{e}}_m,
\end{equation}
where $\textit{\textbf{e}}_d\in\mathbb{R}^n$ and $\textit{\textbf{e}}_m\in\mathbb{R}^m$ stand for the data-domain and measurement-domain noise, respectively.
Model $\textit{\textbf{e}}_d$ and $\textit{\textbf{e}}_m$ as the Gaussian vectors with each entry \emph{i.i.d.} chosen from the Gaussian distribution $\mathcal{N}(0, \delta_d)$ and $\mathcal{N}(0, \delta_m)$, respectively.}

\begin{Example}
Let $m=1024$, $n=3072$, $k=200$ and normalize the sparse signals.
Construct an $m\times n$ binary matrix from the line-point incidence matrix of $EG(2,2^6)$.
As comparisons, we also construct the Gaussian matrix and real-valued frame $DG(9,0)$ \cite{rcsh}\cite{sjthesis} with the same size, which are representatives for random and deterministic measurement matrices, respectively.
Change $\delta_d$ and $\delta_m$ independently from $10^{-6}$ to $10^{-1}$ and plot the average {BP\footnote{Actually, the BP denoising algorithm is used here to deal with the noise.}} recovery $SNR =-10\log(||\textit{\textbf{x}}^*-\textit{\textbf{x}}||_2/||\textit{\textbf{x}}||_2)$
as a function of $\delta_m$ and $\delta_d$, see Fig. \ref{fig:mixnoise}.
\begin{figure*}
\centering
\subfigure[]{\includegraphics[width=0.32\textwidth]{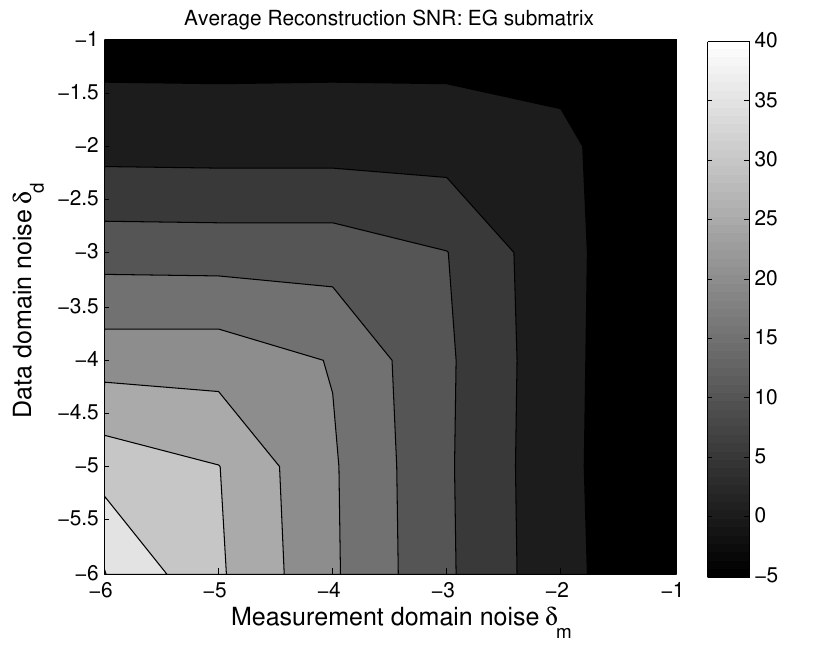}}
\subfigure[]{\includegraphics[width=0.32\textwidth]{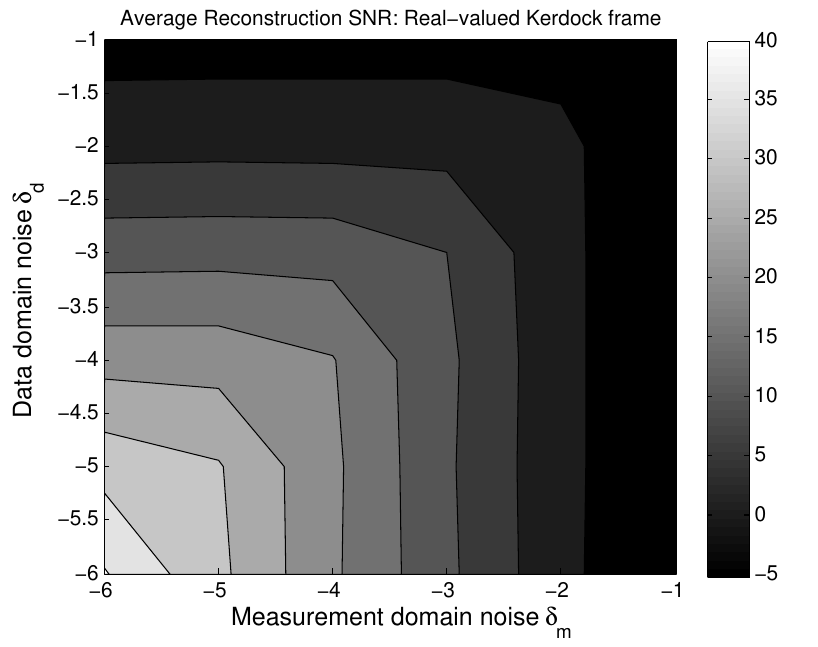}}
\subfigure[]{\includegraphics[width=0.32\textwidth]{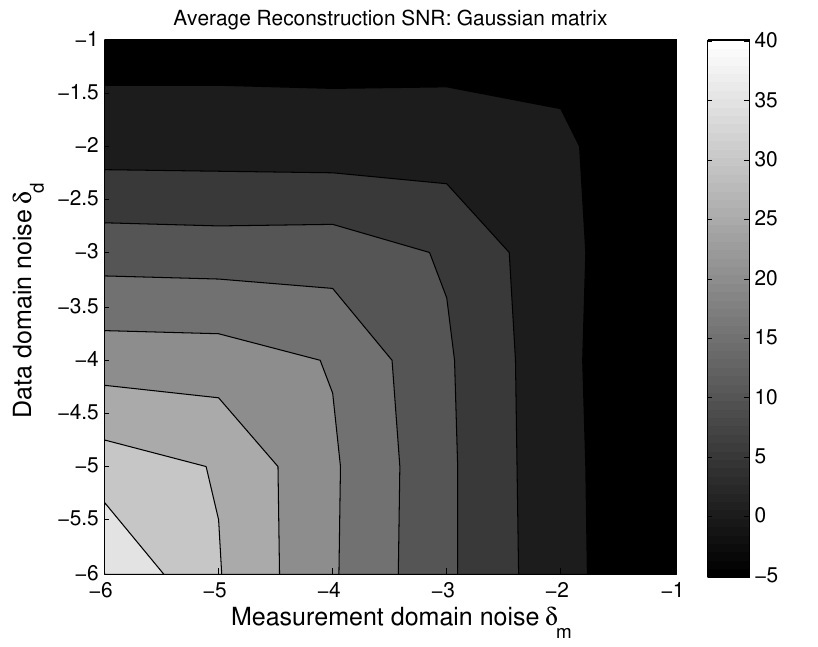}}
\caption {Average BP recovery SNR as a function of the data-domain noise $\delta_d$ and measurement-domain noise $\delta_m$ for (a) binary measurement matrix from $EG(2,2^6)$, (b) real-valued $DG(9,0)$ frame, and (c) random Gaussian matrix.
$m=1024$, $n=3072$ and $k=200$.}\label{fig:mixnoise}
\end{figure*}
It is easily seen that similar to Gaussian matrices and real-valued DG frames, the proposed matrices have stable and robust {empirical} performance in practice.
\end{Example}
%For the case with data-domain noises only, we fix the mean value to be zero, change the standard deviation $\delta_d$ from $10^{-6}$ to $10^{-1}$, and the impact of data-domain noise on the average relative recovery error $\textit{\textbf{e}}= ||\textit{\textbf{x}}^*-\textit{\textbf{x}}||_2/||\textit{\textbf{x}}||_2$ is shown in Fig.\ref{fig:datanoise}.
%\begin{figure}
%\centering
%\includegraphics[width=0.4\textwidth]{figdatanoise.pdf}
%    \caption {Impact of data-domain noise on the recovery performance of CS with measurement matrices from $EG(2,2^6)$, real-valued  $DG(9,0)$ frame and Gaussian matrix. The noise has zero mean and standard deviation $\delta_d$ ranging from $10^{-6}$ to $10^{-1}$.
%   $m=1024$, $n=3072$ and $k=200$.}\label{fig:datanoise}
%\end{figure}
%%Actually, for the proposed matrices we can also provide theoretical upper bounds on the relative recovery error under $\ell_1$-minimization, see Theorem 1--3 in \cite{xlsx1} for details.
%
%For the measurement-domain noise only, we conduct a similar experiment and find out how the measurement noise $\delta_m$ affect the error $\textit{\textbf{e}}$, see Fig. \ref{fig:measnoise}.
%\begin{figure}
%\centering
%\includegraphics[width=0.4\textwidth]{figmeasnoise.pdf}
%    \caption {Impact of measurement-domain noise on the recovery performance of CS with measurement matrices from $EG(2,2^6)$, real-valued $DG(9,0)$ frame and Gaussian matrix. The noise has zero mean and standard deviation $\delta_m$ ranging from $10^{-6}$ to $10^{-1}$.
%    $m=1024$, $n=3072$ and $k=200$.}\label{fig:measnoise}
%\end{figure}
\begin{Example}
  At last, we apply the binary matrix from finite geometry to the Lena image with size $128\times128$.
\begin{figure}
\centering
\subfigure[Original image] {\includegraphics[width=0.2\textwidth]{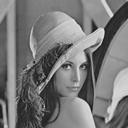}}
\subfigure[25\% sparse image ]{\includegraphics[width=0.2\textwidth]{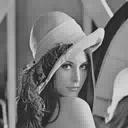}}
\subfigure[FG, PSNR = 26.8 ]{\includegraphics[width=0.2\textwidth]{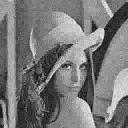}}
\subfigure[Rnd, PSNR = 24.9 ]{\includegraphics[width=0.2\textwidth]{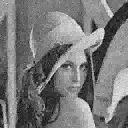}}
\caption {Comparison of the OMP recovery performance by using binary sampling matrix with size $8192\times 16384$ from finite geometry and the corresponding Gaussian matrix to compress a $128\times128$ Lena image with sparsity of 25\%.}\label{fig:image}
\end{figure}
In Fig.\ref{fig:image}, we firstly sparsify the image by discarding 75\% of its smallest Haar wavelet coefficients.
Afterwards, the binary measurement matrix with size $8192\times 16384$, which is a submatrix of the line-point incidence matrix of $EG(2,2^7)$ and the corresponding Gaussian matrix are employed to measure the sparsified image.
Finally, the image is reconstructed by the OMP algorithm and it is easily seen that the proposed binary matrix outperforms the Gaussian matrix by about 1.9dB.
\end{Example}
\section{Conclusions and Discussions}
\label{conclusion}
In this paper, by drawing methods and results from LDPC codes, we study the deterministic constructions and performance evaluation of binary measurement matrices.
Lower bounds of spark were proposed for real matrices in \cite{ddme} many years ago. When the real matrices are changed to binary matrices, better results emerge. Firstly, a lower bound of spark is obtained for general binary matrices, which improves the one derived from \cite{ddme} in most cases. Then, we propose two classes of deterministic binary measurement matrices based on finite geometry. One class is the incidence matrix $H$ of $\mu_2$-flat over $\mu_1$-flat in finite geometry $FG(r,q)$ or its transpose $H^T$. The other class is the submatrix of $H$ or $H^T$, especially the matrix obtained by deleting row parallel bundles or column parallel bundles from $H$ or $H^T$ in Euclidean geometry.
Many of the proposed matrices have cyclic or quasi-cyclic structure \cite{txla05} which make the hardware realization convenient and easy.
For the class--I finite geometry measurement matrix, two further improved lower bounds of spark are given to show their relatively large spark.
Finally, lots of simulations, including the noiseless and noisy situations, are done according to standard and comparable procedures. Simulation results show that {empirically}, the proposed matrices perform comparably to, and sometimes even better than the corresponding Gaussian random matrices.

Spark is a necessary condition to guarantee practical recovery (such as $\ell_1$-minimization) performance, thus it is often weak and lacks stability in practice.
However, as (\ref{kl1new}) has indicated, sometimes the bounds derived by spark and $\ell_0$-minimization, e.g. Theorem \ref{preth}, also agree with the corresponding results in $\ell_1$-minimization.
Therefore, it seems interesting to investigate whether the rest results about the lower bounds of sparks can be extended to $\ell_1$-minimization.

{
Finally, we discuss a bit more about the famous open problem posed by Tao in 2007 \footnote{http://terrytao.wordpress.com/2007/07/02/open-question-deterministic-uup-matrices/}: constructing deterministic matrices satisfying RIP of order $k=O(m/\log(n/m))$.
As has been indicated in the introduction, most of the existing deterministic constructions are based on coherence and thus can only be shown to hold RIP of order $k=O(\sqrt{m})$.
Up to now, only one remarkable breakthrough on this open problem was made.
In \cite{jbsd}, leveraging the additive combinatorics, Bourgain, \emph{et al.} were able to construct a deterministic RIP matrix with $k=O(m^{1/2+\epsilon_0})$, where $\epsilon_0\approx 5.5169\times 10^{-28}$ and this breaks the notorious square-root bottleneck $k=O(\sqrt{m})$.
Recently, Mixon has made some further progresses and increased $\epsilon_0$ to  $\epsilon_0\approx4.4466\times 10^{-24}$, see \cite{dustin, wikidustin,hpdustin} for more details.
Besides these progresses concerning RIP, there are also some other related contributions.
For example, using the PEG algorithm for constructing the parity-check matrices of LDPC codes, Tehrani \emph{et al.} proposed a family of explicit NSP matrices with $m=O(k\log(n/k))$ that can recover \emph{most} $k$-sparse signals under $\ell_1$-minimization with probability $1-1/n$ \cite{akat,atad}.
In \cite{xlsx2}, we showed that for a binary matrix $H$, ${\rm spark}(H)\geq d(\mathcal{C})$,
where $d(\mathcal{C})$ denotes the minimum distance of the binary code $\mathcal{C}$ defined by $H$.
Thus, we can construct an explicit $m\times n$ matrix $H$ with $m=O(n)$ by using an LDPC code $\mathcal{C}$ with $d(\mathcal{C})=O(n)$ (e.g. \cite{tras,ddsd}), such that \emph{any} $k=O(n)$-sparse signal measured by $H$ can be perfectly reconstructed under $l_0$-optimization.
As a result, perhaps certain special parity-check matrices of LDPC codes could be promising to help solve this open problem.}

% if have a single appendix:
%\appendix[Proof of the Zonklar Equations]
% or
%\appendix  % for no appendix heading
% do not use \section anymore after \appendix, only \section*
% is possibly needed

% use appendices with more than one appendix
% then use \section to start each appendix
% you must declare a \section before using any
% \subsection or using \label (\appendices by itself
% starts a section numbered zero.)
%

{
\appendix
In this appendix, we give an example of measurement matrix with large spark but poor nullspace property (NSP). Firstly, let us look at the definition of (strict) nullspace property.
\begin{Definition}
{\rm \cite{wxbh,adrs}}
  Let $A\in \mathbb{R}^{m\times n}$, $k\in \mathbb{N}^{*}$, and $C\geq1$.
  We say that $A$ has the \emph{strict nullspace property} $NSP_{\mathbb{R}}^{<}(k,C)$, i.e., $A\in NSP_{\mathbb{R}}^{<}(k,C)$, if $\forall K\in[n]$ with $|K|\leq k$,
  \begin{equation*}
    C\cdot||\textit{\textbf{w}}_K||_1<||\textit{\textbf{w}}_{\bar{K}}||_1,\quad \forall \textit{\textbf{w}}\in Nullsp_{\mathbb{R}}^*(A).
  \end{equation*}
\end{Definition}

It is well known that $l_{1}$-optimization can exactly recover any $k$-sparse signal if and only if $A\in NSP_{\mathbb{R}}^{<}(k,C=1)$ \cite[Th.1]{wxbh}\cite[Th.3]{adrs}.

Consider the $m\times n$ Vandermonde matrix with $m\le n$:
\begin{eqnarray*}
A(m,n) = \left(
\begin{array}{ccccc}
1&1&1&\cdots&1 \\
\alpha_1&\alpha_2&\alpha_3&\cdots&\alpha_n \\
\vdots&\vdots&\vdots&\ddots&\vdots \\
\alpha_1^{m-1}&\alpha_2^{m-1}&\alpha_3^{m-1}&\cdots&\alpha_n^{m-1}
\end{array}
\right).
\end{eqnarray*}
Suppose $\alpha_1, \alpha_2, \alpha_3, \cdots,\alpha_n$ are different from each other, then $A(m,n)$ is full spark \cite{bajc}, i.e., ${\rm spark}(A(m,n))=m+1$.
Let $\textit{\textbf{w}}\in \mathbb{R}^{n}$ be a column vector with its $i$-th entry to be $$\frac{1}{\Pi_{i\neq j} (\alpha_{i}-\alpha_{j})},\quad 1\le i,j\le n.$$
It is easy to check that $\textit{\textbf{w}}\in{\rm Nullsp}_{\mathbb{R}}^{*}(A(m,n))$ \footnote{http://mathoverflow.net/questions/49255/how-to-determine-the-kernel-of-a-vandermonde-matrix.} when $m<n$.

Let $n=21$, $m=n-1$ and $\alpha_i = i$, $1\le i\le 21$, then the Vandermonde matrix $A(20,21)$ has ${\rm spark}(A(20,21))=21$.
However, numerical computations by Matlab show that $|w_{11}|>|w_{10}|=|w_{12}|>|w_9|=|w_{13}|>|w_{j}|$ for any $1\le j\le 8$ and $14\le j\le 21$, $|w_{10}|+|w_{11}|+|w_{12}|<\frac{||\textit{\textbf{w}}||_1}{2}$ and  $|w_{9}|+|w_{10}|+|w_{11}|+|w_{12}|>\frac{||\textit{\textbf{w}}||_1}{2}$, which implies $A(20,21)\in NSP_{\mathbb{R}}^{<}(k_{\ell_1}=3,C=1)$.
By (\ref{proeq1}), any signal with sparsity $k_{\ell_0}\le10$ can be recovered by $\ell_0$-minimization, while from NSP, all signals with sparsity $k_{\ell_1}\le3$ can obtain perfect reconstruction by $\ell_1$-minimization.
%Besides the poor NSP, Vandermonde matrices are also shown to hold poor RIP since when $n$ increases, the RIC of Vandermonde matrices approaches 1 rapidly \cite{acwd, aafm}.
}

\section*{Acknowledgment}
The authors would like to thank Mr. Hatef Monajemi for explaining some details of \cite{hmsj} patiently by emails.
{The authors wish to express their sincere gratefulness to the two anonymous reviewers and the associate editor, Prof. Akbar Sayeed, for their valuable suggestions and comments that helped to greatly improve this paper.}
% Can use something like this to put references on a page
% by themselves when using endfloat and the captionsoff option.
\ifCLASSOPTIONcaptionsoff
  \newpage
\fi

\end{document}